\newcommand{\be}{\begin{equation}}
\newcommand{\ee}{\end{equation}}
\newcommand{\om}{\omega}
\newcommand{\ra}{\rightarrow}
\newcommand{\cU}{\mathcal{U}}
\newcommand{\reals}{\mathbb{R}}
\newcommand{\dom}{\operatorname{dom}}
\newcommand{\rh}{\tilde h}
\newcommand{\cE}{\mathcal{E}}
\newcommand{\cM}{\mathcal{M}}
\newcommand{\cX}{\mathcal{X}}
\newcommand{\id}{\mathbf{1}}
\newtheoremstyle{myplain}
{5pt}			
{5pt}			
{\itshape}	
{}			
{\bfseries}		
{.}			
{.5em}		
{\thmname{#1}\thmnumber{ #2}\thmnote{~{(#3)}}}
\theoremstyle{myplain}
\newtheorem{theorem}{Theorem}
\newtheorem{definition}[theorem]{Definition}
\newtheorem*{assumption}{Assumptions}
\newtheorem{corollary}[theorem]{Corollary}
\newtheorem{proposition}[theorem]{Proposition}
\begin{document}

\title{Equivalence and nonequivalence of ensembles:\\ Thermodynamic, macrostate, and measure levels}

\author{Hugo Touchette}
\email{htouchette@sun.ac.za}
\affiliation{National Institute for Theoretical Physics (NITheP), Stellenbosch 7600, South Africa}
\affiliation{Department of Physics, Stellenbosch University, Stellenbosch 7600, South Africa}
\affiliation{Institute of Theoretical Physics, Stellenbosch University, Stellenbosch 7600, South Africa}

\date{\today}

\begin{abstract}
We present general and rigorous results showing that the microcanonical and canonical ensembles are equivalent at all three levels of description considered in statistical mechanics~--~namely, thermodynamics, equilibrium macrostates, and microstate measures~--~whenever the microcanonical entropy is concave as a function of the energy density in the thermodynamic limit. This is proved for any classical many-particle systems for which thermodynamic functions and equilibrium macrostates exist and are defined via large deviation principles, generalizing many previous results obtained for specific classes of systems and observables. Similar results hold for other dual ensembles, such as the canonical and grand-canonical ensembles, in addition to trajectory or path ensembles describing nonequilibrium systems driven in steady states. 
\end{abstract}

\keywords{Microcanonical and canonical ensembles, equivalent and nonequivalent ensembles, large deviation theory, entropy, long-range systems}

\maketitle

\tableofcontents

\newpage

\section{Introduction}

The problem of determining whether the microcanonical and canonical ensembles give the same predictions has a long history in statistical mechanics, starting from Boltzmann's introduction of these ensembles as the \textit{ergode} and \textit{holode} \cite{boltzmann1968}, respectively, and Gibbs's formulation of these ensembles in their modern probabilistic form \cite{gibbs1902}. Depending on the level of description considered, this equivalence problem takes different forms:
\begin{itemize}
\item \textbf{Thermodynamic equivalence:} Are the microcanonical thermodynamic properties of a system determined from the entropy as a function of energy the same as the canonical thermodynamic properties determined from the free energy as function of temperature? Are energy and temperature always one-to-one related?

\item \textbf{Macrostate equivalence:} Is the set of equilibrium values of macrostates (e.g., magnetization, energy, velocity distribution, etc.) determined in the microcanonical ensemble the same as the set of equilibrium values determined in the canonical ensemble? What is the general relationship between these two sets?

\item \textbf{Measure equivalence:} Does the Gibbs distribution defining the canonical ensemble at the microstate level converge (in some sense to be made precise) to the microcanonical distribution defined by Boltzmann's equiprobability postulate?

\end{itemize}

Many results have been derived over the years, providing conditions for equivalence at each of these levels, as well as conditions for relating one level to another; see \cite{touchette2004b} for a review. It is known in particular that equivalence holds at the thermodynamic level whenever the entropy is concave and that this  also implies, under additional conditions, the equivalence of the microcanonical and canonical ensembles at the macrostate level. Although the first result is general -- it is just a mathematical statement about concave functions and the duality of Legendre transforms -- the second has been derived by Ellis, Haven and Turkington \cite{ellis2000} for a class of systems comprising mostly ideal (non-interacting), long-range, and mean-field systems. As for the measure level, a number of general results have been obtained by Lewis, Pfister and Sullivan \cite{lewis1994a,lewis1994,lewis1995}, but have not been completely related to the two other levels for general systems and macrostates. 

The aim of this paper is to survey these results and to complete them by proving in a general way that equivalence holds at each of the level above under the same condition, namely the concavity of the entropy. The main ideas behind our results were presented in \cite{touchette2011b}; here we focus on giving rigorous proofs, as well as on studying the measure level, which is not considered in \cite{touchette2011b}. For the macrostate level, our results significantly generalize those of Ellis, Haven and Turkington \cite{ellis2000} to any macrostate of any classical many-particle system for which equilibrium statistical mechanics is defined. The same results also imply equivalence results for the measure level, generalizing those of Lewis, Pfister and Sullivan \cite{lewis1994a,lewis1994,lewis1995} in terms of systems and observables considered, and the way equivalence at this level is defined. In the end, our results show that all three equivalence levels coincide, under the condition that entropy be concave.

This condition is important because recent studies have shown that many physical systems have nonconcave entropies in the thermodynamic limit. The common property of these systems is that they involve long-range interactions that asymptotically decay at large distances $r$ according to $r^{-\alpha}$ with $0\leq \alpha\leq d$, where $d$ is the dimensionality of the system. Thus, the dividing line between equivalence and nonequivalence of ensembles is essentially between short- and long-range systems: the former have concave entropies, and are thus described equivalently by the microcanonical or the canonical ensemble, as proved by Ruelle \cite{ruelle1969} (see also \cite{griffiths1972}), whereas the latter can have nonconcave entropies and therefore nonequivalent ensembles.\footnote{Although this has not been proved rigorously, it is thought that the presence of long-range interactions is a necessary but not sufficient condition for having nonconcave entropies. It is known at least that not all long-range systems have nonconcave entropies.} Gravitating particles are historically and physically the most important example of long-range systems showing this behavior, as discovered by Lynden-Bell \cite{lynden1968,lynden1977,lynden1999} and Thirring  \cite{thirring1970} in the late 1960s, and as extensively studied since then; see \cite{chavanis2006,campa2009,dauxois2010} for recent reviews. Other examples include non-screened plasma, dipolar systems, statistical models of two-dimensional turbulence, and mean-field systems in general; see \cite{campa2009} for a review. Recently, experiments based on ion, cold atom and optical traps have  been proposed to observe long-range interactions and nonconcave entropies \cite{kastner2010,kastner2010b,olivier2014,chalony2013}.

The recent study of these long-range systems explains the need to revisit the equivalence problem. Indeed, most works on this problem assume either implicitly or explicitly that entropy is always concave, and so conclude that ensembles are always equivalent (except possibly at phase transitions, as already noted by Gibbs \cite{gibbs1902}); see, for example, \cite{gurarie2007,galgani1970,galgani1971}. In Ruelle's work \cite{ruelle1969}, equivalence is not assumed but follows directly from the class of interactions considered, namely short-range and tempered, for which it can be proved using subadditivity arguments that the entropy exists and is concave in the thermodynamic limit (see also \cite{lanford1973}). The same applies to more recent works on Gibbs states and ensemble equivalence at the level of the so-called \emph{empirical process} or \emph{level-3} macrostate; see \cite{deuschel1991,stroock1991,stroock1991b,roelly1993} and, in particular, the work of Georgii \cite{georgii1993,georgii1994,georgii1995}.

Long-range systems can have nonconcave entropies precisely because the subadditivity argument is not applicable: in the presence of long-range interactions, one cannot divide a system into subsystems in such a way that the total energy of the system is extensive in the energies of the subsystems \cite{campa2009}. The thermodynamic limit of this system can still be defined using Kac's rescaling prescription \cite{kac1963} for all the usual thermodynamic quantities (entropy, free energy, etc.), so that statistical mechanics applies to long-range systems in the same way as for short-range systems \cite{campa2009}. The difference, however, is that the entropy function is not necessarily concave.

Here, we investigate the consequences of this property for the equivalence of the microcanonical and canonical ensembles in the case of general classical $N$-particle systems. Unlike several works on the subject, we do not consider specific systems defined by a class of Hamiltonians, but rather assume that the Hamiltonian is given and that the thermodynamic potentials and equilibrium states obtained from this Hamiltonian exist in each ensemble and are characterized, as explained in the next section, by well-defined large deviation principles. This is a natural assumption given, on the one hand, that the equivalence problem has no meaning when thermodynamic potentials and equilibrium states do not exist and, on the other, that all cases of thermodynamic behavior and equilibrium states known to date are described by large deviation theory.\footnote{There are strong reasons to believe that this cannot be otherwise: that is, many-body systems should have equilibrium states in the thermodynamic limit only when they are described by large deviation theory or, more precisely, when their distribution follows what is called the large deviation principle; see Sec.~\ref{secdef}.} With this assumption, we then prove that ensemble equivalence holds at the thermodynamic, macrostate and measure levels when the entropy is concave in the thermodynamic limit. This generalizes and unifies, as mentioned, all previous results on this problem.

For simplicity, we focus in this paper on the microcanonical and canonical ensembles, but as mentioned in Sec.~\ref{secother} the results proved also hold with minor modifications to other dual ensembles, such as the canonical and grand-canonical ensembles, the volume and pressure ensembles, and the magnetization and magnetic field ensembles. In each case, the entropy function entering in the equivalence condition has to be replaced by the thermodynamic potential of the constrained ensemble considered, for example, the entropy as a function of the particle density for the canonical ensemble. As  shown in that section, the same notion of equivalence also applies to nonequilibrium generalizations of the canonical and microcanonical ensembles defined for paths of Markov processes. What underlies the problem of ensemble equivalence is in fact a general relationship between the conditioning and the so-called tilting of probability measures \cite{lewis1994a,lewis1994,lewis1995}, arising in many problems in probability theory, stochastic simulations, and the study of stochastic processes.

The organization of this paper is as follows. In Sec.~\ref{secdef} we define the microcanonical and canonical ensembles, as well as the basic large deviation principles used to define the set of equilibrium macrostates in each ensemble. This section follows the standard construction of these ensembles in terms of large deviations, which can be found for example in \cite{ellis1985,ellis1995,ellis1999,touchette2009}. In Secs.~\ref{secthermo}, we state some known definitions and results about thermodynamic equivalence, and then prove in Secs.~\ref{secmacro} and \ref{secmeas} new equivalence results that relate this level to the macrostate and measure levels, respectively. The main insight used for proving equivalence at the macrostate level is an exact variational principle, stated in Sec.~\ref{secmacro}, relating the typical states and fluctuations of the microcanonical ensemble to those of the canonical ensemble. Rigorous proofs of all the results follow using a combination of convex analysis results, summarized in Appendix~\ref{appconc}, and a fundamental result of large deviation theory known as Varadhan's Theorem, stated in Appendix~\ref{appvar}. Finally, in Sec.~\ref{secother}, we show how to generalize our results to other dual equilibrium and nonequilibrium ensembles, as mentioned above. 

\section{Notations and definitions}
\label{secdef}

We introduce in this section the notations used in the paper and recall the definitions of the microcanonical and canonical ensembles following the large deviation theory approach to statistical mechanics~\cite{ellis1985,ellis1995,ellis1999,touchette2009}. The notations closely follow those of \cite{ellis2000}. 

\subsection{Systems and macrostates}

We consider a system of $N$ classical particles with microscopic configuration or microstate $\om=(\om_1,\om_2,\ldots,\om_N)\in \Lambda_N=\Lambda^N$, where $\om_i$ is the state of the $i$th particle taking values in some space $\Lambda$. The total energy of the system is given by its Hamiltonian $H_N(\om):\Lambda_N\ra\reals$, from which we define the mean energy or energy per particle as $h_N(\om)=H(\om)/N$. For simplicity, we do not consider the volume of the system, so that the thermodynamic limit is obtained by taking the limit $N\ra\infty$ with $h_N$ kept constant. Systems with a volume are considered  in Sec.~\ref{secother}, which treats the equivalence of the canonical and grand-canonical ensembles.

At the macroscopic level, the $N$-particle system is characterized in terms of a macrostate, defined mathematically as a function $M_N:\Lambda_N\ra\cM$ taking values in some measurable space $\cM$. This macrostate can represent, for example, the mean magnetization of a spin system, in which case $\cM=[-1,1]$, or the empirical distribution of velocities or positions of a gas of $N$ particles, in which case $\cM$ is the space of normalized probability distributions. Note that the same symbol $M_N$ is used to denote a single (scalar) macrostate or a sequence (vector) of macrostates.

To treat the general case where $M_{N}$ can take values in a function space, $\cM$ is usually considered in large deviation theory to be a topological space known as a \emph{Polish space}, which is a metric, separable and complete topological space \cite{dembo1998}.\footnote{There are many reasons for considering Polish spaces: one is that projections of measurable subsets of a Polish space are measurable; another is that the set of probability measures defined on a Polish space is also Polish; see Appendix D of \cite{dembo1998} for more details.} In this paper, we follow a more practical approach and consider $\cM$ to be a subset of $\reals^d$ with the usual Euclidean metric. This is not a fundamental restriction, since our main results rely, as will be noted, on general large deviation results stated in the context of Polish spaces, but it is convenient to simplify the notations and to avoid unnecessary abstract topological issues. Empirical distributions and other similar macrostates defined over function spaces can be treated in $\reals^{d}$ by discretizing them into finite-dimensional vectors and by taking, as is standard in physics, the continuum limit. In large deviation theory, this discretization is handled rigorously with the concept of projective limits; see Sec. 4.6 of \cite{dembo1998}. 

To construct a statistical description of the $N$-particle system, we finally need a prior measure $P_N$ on $\Lambda_N$, whose basic element is denoted either by $P_N(d\om)$ or $dP_N(\om)$. In statistical physics, this prior is almost always taken to be the (non-normalized) Lebesgue measure $d\om$. Here, we follow \cite{ellis2000} and make $P_N$ explicit in the definition of statistical ensembles. This has the advantage of allowing one to consider models which do not necessarily comply with Boltzmann's equiprobability postulate, and so for which the prior is not necessarily proportional to $d\om$. For a comparison of the two approaches, see Secs.~5.1 and 5.2 of \cite{touchette2009}.

\subsection{Equilibrium ensembles} 

We consider throughout most of the paper the equilibrium properties of macrostates as calculated in the canonical and microcanonical ensembles. Generalizations of the results obtained for other ensembles are presented in Sec.~\ref{secother}.

The canonical ensemble is defined in the usual way by the microstate probability measure
\be
P_{N,\beta}(d\om)=\frac{e^{-\beta H_N(\om)}}{Z_N(\beta)} P_N(d\om)
\label{eqcan1}
\ee
where
\be
Z_N(\beta)=E_{P_N}[e^{-\beta H_N(\om)}]=\int_{\Lambda_N}e^{-\beta H_N(\om)}\, dP_N(\om)
\label{eqzn1}
\ee
is the \emph{partition function} normalizing $P_{N,\beta}$ and $\beta=(k_BT)^{-1}$ is the inverse temperature. From this measure, the probability of any event $M_N\in A$ involving a macrostate $M_N$ and some measurable subset $A$ of $\cM$ is calculated as
\be
P_{N,\beta}\{M_N\in A\}=\int_{\Lambda_N} \id_A \big(M_N(\om)\big)\ P_{N,\beta}(d\om)=\int_{M^{-1}_N(A)} dP_{N,\beta}(\om)
\ee
where
\be
M_N^{-1}(A)=\{\om\in\Lambda_N: M_N(\om)\in A\}
\ee
is the preimage of $M_N\in A$.

The microcanonical ensemble is defined, on the other hand, via Boltzmann's equiprobability postulate by assigning a constant weight to all microstates $\om$ having an energy $H_N(\om)=U$. This is generalized in the case of a general prior $P_{N}$ by conditioning $P_N$ on the set of microstates having a mean energy $h_N(\om)$ lying in the `thickened' energy shell $[u-r,u+r]$:
\be
P^{u,r}_N(d\om) = P_N\{d\om| h_N\in [u-r,u+r]\},\qquad r>0.
\ee
By Bayes' Theorem, this becomes
\be
P^{u,r}_N(d\om)
= \frac{P_{N}\{h_{N}\in [u-r,u+r]|\om\} P_{N}(d\om)}{P\{h_{N}\in[u-r,u+r]\}}
=\frac{\id_{[u-r,u+r]}\big(h_N(\om)\big)}{P_N\{h_N\in [u-r,u+r]\}}P_N(d\om),
\ee
where $\id_A(x)$ is the indicator function of the set $A$ and
\be
P_N\{h_N\in [u-r,u+r]\}=\int_{\Lambda_N} \id_{[u-r,u+r]}\big(h_N(\om)\big)\, dP_N(\om)
\ee 
is a normalizing factor representing the mean energy distribution with respect to the prior $P_N$. The need to consider the mean energy rather than the energy in the definition of $P_N^{u,r}$ arises because of the thermodynamic limit, whereas the thickened energy shell is there to make $P^{u,r}_N$ a well defined probability measure. 

Physically, the limit $r\ra 0$ must of course be taken to obtain results that are independent of $r$ \cite{ellis2000,pathria1996,dorlas1999}. From now on, this limit will be implicit, so we omit $r$ in $P^{u,r}_N$ and replace the interval $[u-r,u+r]$ by the infinitesimal element $du$, so as to write the microcanonical ensemble at mean energy $u$ simply as
\be
P^u_N(d\om)=P_N\{d\om| h_N\in du\}=\frac{\id_{du}\big(h_N(\om)\big)}{P_N\{h_N\in du\}} P_N(d\om).
\label{eqmicro1}
\ee
From this microstate measure, macrostates probabilities at fixed energy are then calculated, as in the canonical case, using
\be
P_N^u\{M_N\in A\}=\int_{\Lambda_N} \id_A \big(M_N(\om)\big)\ P_N^u(d\om)=\int_{M_N^{-1}(A)} dP_N^u(\om). 
\ee
A more physical but less rigorous approach based on probability densities instead of probability measures, which avoids the use of $r$, can be found in \cite{touchette2011b}; alternatively, see \cite{lewis1995} for a definition of the microcanonical ensemble based on a set conditioning of the form $h_N\in A_N$.

\subsection{Large deviation principles}

The stability of equilibrium systems observed physically at the macroscale arises because large fluctuations of macrostates are extremely unlikely due to the fact that $P_{N,\beta}$ and $P_N^u$ concentrate exponentially with the system size around certain values of $M_{N}$ corresponding to equilibrium states. This exponential concentration is known in probability theory as the \emph{large deviation principle} (LDP) and is defined as follows. Consider first the canonical ensemble. We say that \emph{$M_N$ satisfies the LDP with respect to $P_{N,\beta}$ if there exists a lower semicontinuous function $I_\beta$ such that}
\be
\limsup_{N\ra\infty} \frac{1}{N}\ln P_{N,\beta}\{M_N\in C\}\leq - \inf_{m\in C} I_\beta(m)
\label{eqldpb1}
\ee
\emph{for any closed sets $C$ and}
\be
\liminf_{N\ra\infty} \frac{1}{N}\ln P_{N,\beta}\{M_N\in O\}\geq -\inf_{m\in O} I_\beta(m)
\label{eqldpb2}
\ee
\emph{for any open sets $O$}.\footnote{We should define the LDP more precisely for the \emph{sequence} $\{P_{N,\beta}\}$ of probability measures associated with the \emph{sequence} $\{M_N\}$ of random variables. Here, we simplify the presentation by referring directly to macrostates and their probabilities.} The function $I_\beta$ is called the \emph{rate function}; in \cite{lewis1994a,lewis1994,lewis1995}, it is also called the Ruelle-Lanford (R-L) function. In the microcanonical ensemble, we say similarly that $M_N$ satisfies the LDP with respect to $P_N^u$ if the same limits exist for a (lower semicontinuous) rate function $I^u$.

In most physical applications, the rate function is continuous and the upper and lower bounds above turn out to be the same for `normal' sets (typically intervals or compact sets). In this case, we can express the LDP for a macrostate $M_N$ taking value in $\reals$ simply as
\be
\lim_{N\ra\infty}-\frac{1}{N}\ln P_{N,\beta}\{M_N\in [m-r,m+r]\}=I_\beta(m)
\label{eqldp3}
\ee
with the limit $r\ra 0$ implicit as before. For $M_N\in\reals^d$, $[m-r,m+r]$ is replaced by a ball $B_r(m)=\{m'\in \cM: \|m-m'\|\leq r \}$ centered at $m$ to obtain the same result with $N\ra\infty$ followed by $r\ra 0$. In both cases, it is convenient to summarize the limit defining the LDP using the \emph{logarithmic equivalence} notation
\be
P_{N,\beta}\{M_N\in dm\}\asymp e^{-NI_\beta(m)}\, dm
\label{eqldp4}
\ee
where $dm$ denotes an infinitesimal interval or ball centered at $m$ \cite{ellis1985,ellis1995,ellis1999}. This way, we emphasize the two fundamental properties of the LDP, namely the exponential decay of probabilities with $N$, except at points where the rate function vanishes, and the fact that this decay is in general only \emph{approximately} exponential in $N$, that is, exponential in $N$ up to first order in the exponent.\footnote{In information theory, the sign $\doteq$ is sometimes used instead of $\asymp$ \cite{cover1991}.} A similar notation obviously holds for the microcanonical LDP.

The simplified LDPs in (\ref{eqldp3}) and (\ref{eqldp4}) are convenient for expressing the results of this paper, but are not used for proving these results. All the LDPs stated in the following with $\asymp$ are shorthand for the full definition of the LDP given above, with the upper and lower bounds, due to Varadhan \cite{varadhan1966}. Moreover, though most results about macrostates are stated for $\cM=\reals^{d}$, they can be strengthen, as mentioned, to a Polish space $\cM$. For more details about LDPs defined in the context of statistical mechanics, the lower semicontinuity of rate functions, and the $\asymp$ notation, see \cite{ellis1985,ellis1995,ellis1999,touchette2009}. 

\subsection{Equilibrium macrostates}
\label{secdefeqmac}

The LDP of $M_N$ with respect to $P_N^u$ and $P_{N,\beta}$ imply, as mentioned before, that these measures concentrate exponentially with $N$ on certain points of $\cM$ corresponding physically to the \emph{typical} or \emph{equilibrium} values of $M_N$ obtained in the thermodynamic limit. Mathematically, these points must correspond to minima and zeros of $I^u$ and $I_\beta$, since rate functions are always non-negative \cite{ellis1985}. This justifies defining the set $\cE^u$ of equilibrium values of the macrostate $M_N$ in the microcanonical ensemble at mean energy $u$ as
\be
\cE^u=\{m\in\cM:I^u(m)=0\}
\ee
and the set of equilibrium values of $M_N$ in the canonical ensemble at inverse temperature $\beta$ as
\be
\cE_\beta=\{m\in\cM:I_\beta(m)=0\}.
\ee
The former definition formalizes Einstein's observation that microcanonical equilibrium states maximize the macrostate entropy, identified here as $-I^u(m)$, whereas the latter formalizes Landau's later observation that canonical equilibrium states minimize the canonical macrostate free energy, corresponding here to the rate function $I_\beta(m)$. More information about these definitions and identifications can be found in \cite{ellis1985,ellis1995,ellis1999}, \cite{lewis1994a,lewis1994,lewis1995}, and Secs.~5.3 and 5.4 of \cite{touchette2009}. In \cite{lewis1994a,lewis1994,lewis1995}, the equilibrium macrostate sets are called \emph{concentration sets}.

As observed by Lanford \cite{lanford1973} (see also \cite{ellis2000,lewis1994a,lewis1994,lewis1995}), the interpretation of the elements of $\cE^u$ or $\cE_\beta$ as equilibrium states is rigorously justified when these sets contain one element. In this common case, it is relatively easy to show that the microcanonical or canonical measure of $M_N$ is exponentially concentrated on a single value, so that $M_N$ converges in probability to this typical value in the limit $N\ra\infty$. A proof of this result, which establishes a Law of Large Numbers for $M_N$, can be found for example in Theorems 2.5 and 3.6 of \cite{ellis2000}.

There is a problem, however, when $\cE^u$ or $\cE_\beta$ contains more than one elements for a given value of their parameters, which arises typically when there is phase coexistence in phase transitions. In this case, there are two possibilities: i) all the elements of $\cE^u$ or $\cE_\beta$ are concentration points of $M_N$ and so correspond to `real' equilibrium states; or ii) some of these elements correspond to points where the probability of $M_N$ decays sub-exponentially with $N$. Here, we consider the first case; the second involves corrections to the LDP of $P_N^u$ or $P_{N,\beta}$ far beyond the scope of this paper. For a discussion of these corrections in the context of the 2D Ising model, see Examples 5.4 and 5.6 of \cite{touchette2009} and the references cited therein.

\subsection{Thermodynamic potentials}

Before we proceed to discuss the equivalence of the microcanonical and canonical ensembles, we need to define two additional functions, corresponding to the thermodynamic potentials of each ensemble. The first is the \emph{canonical free energy} or \emph{specific free energy} defined as
\be
\varphi(\beta)=\lim_{N\ra\infty}-\frac{1}{N}\ln Z_{N}(\beta)
\label{eqcanfr1}
\ee
with $\beta\in\reals$. This function is also sometimes called the \emph{pressure} \cite{ruelle1969} following its interpretation in the grand-canonical ensemble. Its domain is denoted by 
\be
\dom \varphi=\{\beta\in\reals: \varphi(\beta)>-\infty\}.
\ee
In large deviation theory, $\varphi(\beta)$ is up to a sign the so-called \textit{scaled cumulant generating function} \cite{touchette2009} of the mean energy $h_N$ with respect to the prior $P_N$; see (\ref{eqzn1}). We define this function in the mathematical rather than physics way without a $1/\beta$ pre-factor in order for $\varphi(\beta)$ to be everywhere concave. To be more precise, $\varphi(\beta)$ is by definition a finite, concave and upper semicontinuous function \cite{ellis2000}; by concavity, it is also continuous in the interior of its domain; see \cite{rockafellar1970}. 

In the microcanonical ensemble, the thermodynamic potential to consider is the \emph{microcanonical entropy} or \emph{specific entropy}, defined as
\be
s(u)=\lim_{r\ra 0}\lim_{N\ra\infty}\frac{1}{N}\ln P_N\{h_N\in [u-r,u+r]\}=\lim_{N\ra\infty} \frac{1}{N}\ln P_{N}\{h_{N}\in du\},
\label{eqsu1}
\ee
$P_N$ being again the prior measure. The domain of this function is 
\be
\dom s=\{u\in\reals: s(u)>-\infty\}
\ee
and is assumed to coincide with the range of $h_N$, so that $s$ is defined for all possible values of $h_{N}$. It is clear from the large deviation point of view that the definition of $s(u)$ is equivalent to an LDP for the mean energy $h_N$ with respect to $P_N$, which we write without a minus sign to comply with the physics notation. With the asymptotic notation, we thus express this LDP as
\be
P_N\{h_N\in du\}\asymp e^{Ns(u)}du,
\label{eqsu2}
\ee
where $du=[u-r,u+r]$ with $r\ra 0$ as before.

The entropy function $s(u)$ is upper semicontinuous since it is a rate function \cite{ellis1985}, but is not necessarily concave, as often assumed. Following the introduction, it is the system studied and the form of its Hamiltonian $H_N$ that determines whether or not the entropy is concave. For short-range systems, $s(u)$ is concave, but for long-range systems (see \cite{campa2009} for examples), it can be nonconcave. This is the starting point of nonequivalent ensembles.

\section{Thermodynamic equivalence}
\label{secthermo}

We begin our study of the equivalence problem with the thermodynamic level. As mentioned in the introduction, the problem at this level is to determine whether there is a correspondence between the thermodynamic properties of an $N$-particle system obtained in the microcanonical and canonical ensembles via the entropy $s(u)$ and free energy $\varphi(\beta)$, respectively. It is known from thermodynamics that these functions are related by a Legendre transform, so the mathematical question that we need to answer is: \emph{What are the mathematical conditions guaranteeing that the Legendre transform between $s(u)$ and $\varphi(\beta)$ is involutive, that is, self-inverse?} 

These conditions have been studied in many works and relate to the concavity of $s(u)$; see \cite{touchette2004b} for a review. We repeat them in this section to make the presentation self contained and introduce some definitions and concepts of convex analysis that will be used in the next sections. For applications to physical systems having nonconcave entropies, see \cite{campa2009}.

\subsection{Equivalence results}

To discuss the thermodynamic equivalence of the canonical and microcanonical ensembles, we obviously need $\varphi$ and $s$ to exist:
\begin{assumption}[Existence of thermodynamic potentials]\
\begin{enumerate}
\item[\emph{(A1)}] The limit defining $\varphi(\beta)$ exists and yields a function different than $0$ or $\infty$ everywhere;

\item[\emph{(A2)}] $h_N$ satisfies the LDP with respect to the prior measure $P_N$ with entropy function $s(u)$.
\end{enumerate}
\end{assumption}

The assumptions are not independent, for if $s$ exists, then $\varphi$ also exists and is given by the \emph{Legendre-Fenchel transform} (or conjugate)  of $s$:
\be
\varphi(\beta)=\inf_{u\in\reals}\{\beta u-s(u)\}.
\label{eqlf1}
\ee 
This result implies our first result about equivalence, namely: the canonical thermodynamic behavior of a system,  as encoded in $\varphi(\beta)$ as a function of the inverse temperature of a heat bath, can always be determined from the microcanonical ensemble knowing $s(u)$. The rigorous proof of this transform follows using Varadhan's generalization of the Laplace integral approximation (also known as the Laplace principle) reproduced in Appendix~\ref{appvar}; see also Sec.~II.7 of \cite{ellis1985} and Sec.~4 of \cite{ellis2000}. Following the theory of convex functions, we denote this Legendre transform by $\varphi=s^*$ \cite{rockafellar1970}.

What is interesting for the equivalence problem is that the inverse transform does not always hold. To see this, define the Legendre-Fenchel transform of $\varphi$, which corresponds to the double Legendre-Fenchel transform $(s^*)^*=s^{**}$ of $s$:
\be
s^{**}(u)=\inf_{\beta\in\reals}\{\beta u-\varphi(\beta)\}=\inf_{\beta\in\reals}\{\beta u-s^*(\beta)\}.
\ee
This is a concave and upper semicontinuous function such that $s^{**}(u)\geq s(u)$ for all $u\in\dom s$, corresponding geometrically to the \emph{concave envelope} or \emph{concave hull} of $s(u)$ \cite{rockafellar1970}. As a result, if $s$ is concave, then $s=s^{**}$ and the Legendre-Fenchel transform is dual: $\varphi=s^*$ and $s=\varphi^*$. In this case, we say that we have \emph{thermodynamic equivalence}, since $\varphi$ and $s$ can be transformed into one another. However, if $s$ is not concave, that is, if $s\neq s^{**}$, then there are some parts of $s$ that do not correspond to the Legendre-Fenchel transform of $\varphi$ and thus cannot be obtained from the canonical ensemble. In this case, we say that we have \emph{thermodynamic nonequivalence of ensembles} \cite{ellis2000}. Physically, this means that a system having a nonconcave entropy must have thermodynamic properties as a function of the mean energy that cannot be accounted for within the canonical ensemble as a function of temperature (for otherwise $s$ and $\varphi$ would be related by Legendre-Fenchel transform).

This definition of thermodynamic equivalence is \emph{global}, since it is based on the whole of $s$ and $\varphi$. A \emph{local} definition can also be given by comparing $s(u)$ and $s^{**}(u)$ for specific values of the mean energy $u$. In this case, it is convenient to define $s(u)$ as being \emph{concave} at $u\in\dom s$ if $s(u)=s^{**}(u)$ and \emph{nonconcave} otherwise. 

\begin{definition}[Thermodynamic equivalence]\
\begin{enumerate}
\item[\emph{(a)}] If $s(u)$ is concave at $u$, then the microcanonical ensemble at mean energy $u$ is said to be \emph{thermodynamically equivalent} with the canonical ensemble;
\item[\emph{(b)}] If $s(u)$ is nonconcave at $u$, then the microcanonical ensemble at mean energy $u$ is \emph{thermodynamically nonequivalent} with the canonical ensemble.
\end{enumerate}
\end{definition}

\begin{figure*}
\includegraphics{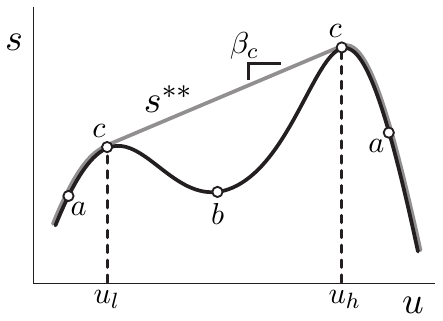}%
\hspace*{1in}%
\includegraphics{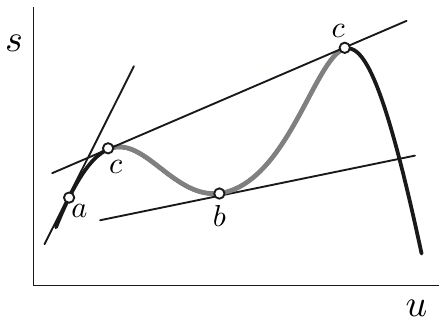}
\caption{Left: Nonconcave entropy $s(u)$ and its concave envelope $s^{**}(u)$. Points $a$: Strictly concave point of $s(u)$ admitting a supporting line that does not touch other points of $s$. Point $b$: Nonconcave point of $s(u)$ with no supporting line. Points $c$: Non-strictly concave points of $s(u)$ with a supporting line touching more than one point of $s$. See Appendix~\ref{appconc} for the complete definitions.}
\label{fignoneq1}
\end{figure*}

Note that, although the term or relation `equivalent' is symmetric (as in `equal'), there is a directionality in the interpretation of equivalence in that, as noted before, the whole of $\varphi$ can always be obtained by Legendre-Fenchel transform from $s$, but $s$ cannot always be obtained from $\varphi$. We will see in the next section that this property leads to a similar `directionality' in the equivalence of the microcanonical and canonical ensembles at the macrostate level.

The next result gives a more geometric characterization of thermodynamic equivalence based on subdifferentials and supporting lines. These concepts, which are important for the next sections, are defined in Appendix~\ref{appconc} and illustrated in Fig.~\ref{fignoneq1}. For the purpose of this paper, the main property of concave points of $s$ to note is that they admit supporting lines except possibly at boundary points of $\dom s$; see Appendix~\ref{appconc} for more details. 

\begin{proposition}
Except possibly at boundary points of $s$, we have the following:
\begin{enumerate}
\item[\emph{(a)}] If $s$ admits a supporting line at $u$, then the microcanonical ensemble at $u$ is thermodynamically equivalent with the canonical ensemble for all $\beta\in\partial s(u)$;
\item[\emph{(b)}] If $s$ does not admit a supporting line at $u$, then the microcanonical ensemble at $u$ is thermodynamically nonequivalent with the canonical ensemble for all $\beta\in\reals$.
\end{enumerate}
\end{proposition}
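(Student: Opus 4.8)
The plan is to reduce both parts of the proposition to the definition of thermodynamic equivalence, namely the condition $s(u)=s^{**}(u)$, by exploiting the standard dictionary of convex analysis between supporting lines, subdifferentials, and the biconjugate. Recall that $\varphi = s^*$ always holds by Assumption (A2) and Varadhan's Theorem, and that $s^{**}$ is the concave envelope of $s$. The key algebraic fact I would invoke from Appendix~\ref{appconc} is the equivalence, for a point $u$ in the interior of $\dom s$, of the following: (i) $s$ admits a supporting line at $u$; (ii) $\partial s(u)\neq\emptyset$; (iii) $s(u)=s^{**}(u)$, i.e.\ $s$ is concave at $u$ in the local sense defined just before the proposition. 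The implication (i)$\Leftrightarrow$(ii) is essentially the definition of the subdifferential in terms of supporting lines. For (ii)$\Rightarrow$(iii): if $\beta\in\partial s(u)$ then $s(u')\le s(u)+\beta(u'-u)$ for all $u'$, so the affine function $u'\mapsto s(u)+\beta(u'-u)$ lies above $s$; taking its Legendre-Fenchel transform shows $\varphi(\beta)=\beta u - s(u)$, and then $s^{**}(u)=\inf_\beta\{\beta u-\varphi(\beta)\}\le \beta u-\varphi(\beta)=s(u)$, which combined with the always-true inequality $s^{**}\ge s$ gives $s(u)=s^{**}(u)$. For (iii)$\Rightarrow$(ii): since $s^{**}$ is concave and upper semicontinuous, at an interior point of its domain it has a nonempty superdifferential $\partial s^{**}(u)$; picking $\beta$ there and using $s(u)=s^{**}(u)$ together with $s\le s^{**}$ shows $\beta\in\partial s(u)$.

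With this dictionary in hand, part (a) is immediate: if $s$ admits a supporting line at $u$, then $\partial s(u)\neq\emptyset$, and for any $\beta\in\partial s(u)$ the computation above gives $\varphi(\beta)=\beta u-s(u)$ and $s(u)=s^{**}(u)$. The latter is, by Definition~(Thermodynamic equivalence)(a), precisely the statement that the microcanonical ensemble at $u$ is thermodynamically equivalent with the canonical ensemble; and the identity $\varphi(\beta)=\beta u - s(u)$ records the sense in which the equilibrium at energy $u$ corresponds to the canonical ensemble at inverse temperature $\beta$, for every such $\beta$. Part (b) is the contrapositive packaging: if $s$ admits no supporting line at $u$, then by (i)$\Rightarrow$(iii) failing we get $s(u)\neq s^{**}(u)$ (using that at an interior point "no supporting line" forces nonconcavity — this is where the restriction to non-boundary points is used, since at a boundary point a concave function can fail to have a supporting line while still equalling its envelope). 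Hence $s$ is nonconcave at $u$ in the local sense, and by Definition~(Thermodynamic equivalence)(b) the ensembles are thermodynamically nonequivalent at $u$; moreover for no $\beta\in\reals$ can we have $\beta\in\partial s(u)$, so there is no inverse temperature realizing this energy.

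The main obstacle, and the only place real care is needed, is the handling of the boundary points of $\dom s$ — which is exactly why the proposition is hedged with "except possibly at boundary points of $s$." At an interior point, concavity of $s$ at $u$ is genuinely equivalent to the existence of a supporting line, but at a boundary point the envelope $s^{**}$ may agree with $s$ (so the ensembles are equivalent by definition) even though no finite-slope supporting line exists there, or a one-sided supporting behavior occurs; the clean subdifferential argument breaks down because $s^{**}$ need not be subdifferentiable at the boundary of its domain. I would therefore state explicitly at the outset that $u$ is taken in the relative interior of $\dom s$, cite the relevant lemma in Appendix~\ref{appconc} for the interior equivalence of (i), (ii), (iii), and remark that the boundary case is excluded precisely because of these pathologies, referring the reader to Rockafellar \cite{rockafellar1970} and Appendix~\ref{appconc}. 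The rest is routine manipulation of Legendre-Fenchel transforms and requires no further input beyond $\varphi=s^*$, which is already granted by Assumption (A2).
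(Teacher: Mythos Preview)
Your proposal is correct and follows essentially the same route as the paper: both reduce the proposition to the convex-analytic equivalence between the existence of a supporting line at $u$, the nonemptiness of $\partial s(u)$, and the identity $s(u)=s^{**}(u)$, together with the Legendre--Fenchel relation $\varphi(\beta)=\beta u - s(u)$ for $\beta\in\partial s(u)$. The only difference is expository: the paper does not spell out the chain (i)$\Leftrightarrow$(ii)$\Leftrightarrow$(iii) but simply cites Theorem 23.5 of Rockafellar \cite{rockafellar1970} and Theorem A.4 of \cite{costeniuc2005} for part (a) and \cite{touchette2004b} for part (b), whereas you unpack these implications explicitly and give a cleaner account of why the boundary-point caveat is needed.
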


Part (a) is of course well known in statistical physics in the form of the Legendre transform 
\be
s(u)=\beta u-\varphi(\beta),\quad \beta=s'(u)
\ee
or more simply $F=E-TS$. The more general characterization of $\beta$ in terms of the subdifferential $\partial s(u)$ arises because $s(u)$ is not necessarily differentiable and follows from the fact that the Legendre-Fenchel transform is equal to 
\be
s(u)=\beta u-\varphi(\beta)
\ee
for all $\beta\in\partial s(u)$ and all $u\in\dom s$ such that $\partial s(u)\neq\emptyset$; see Theorem 23.5 of \cite{rockafellar1970} or Theorem A.4 of \cite{costeniuc2005}. For references on part (b), see \cite{touchette2004b}.

\subsection{Energy-temperature relation}

The thermodynamic equivalence of the canonical and microcanonical ensembles can be understood physically by comparing the mean energies of the two ensembles in the thermodynamic limit. Since $h_N$ is a random variable in the canonical ensemble, this ensemble and the microcanonical ensemble, with its fixed mean energy $u$, cannot be equivalent for $N<\infty$. However, the physical expectation is that the canonical measure of $h_N$ concentrates in the thermodynamic limit around some equilibrium value $u_\beta$ of the mean energy, which can be related for a given $\beta$ to the mean energy $u$ of the microcanonical ensemble.

This reasoning, due to Gibbs \cite{gibbs1902}, can be found in almost all textbooks of statistical mechanics as the basis for stating that the canonical and microcanonical ensembles must become equivalent in the thermodynamic limit. To establish this reasoning as a rigorous result, we must determine the set of equilibrium values of $h_N$ in the canonical ensemble and see if they can indeed be related to the mean energy of the microcanonical ensemble \cite{lewis1994,touchette2004b}. This is done in the next two results, which relate this problem to the concave points of $s$ and, consequently, to the involutiveness of the Legendre-Fenchel transform between $s$ and $\varphi$.

\begin{proposition} 
\label{rescanldp1}
Under Assumptions A1-A2, $h_N$ satisfies the LDP in the canonical ensemble with respect to $P_{N,\beta}$ with rate function
\be
J_\beta(u)=\beta u-s(u)-\varphi(\beta).
\label{eqjb1}
\ee
\end{proposition}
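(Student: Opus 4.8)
\textit{Proof plan.} The plan is to read off the large deviation behaviour of $h_N$ under the tilted measure $P_{N,\beta}$ directly from its definition (\ref{eqcan1}), reducing everything to the prior LDP (Assumption A2) and Varadhan's Theorem (Appendix~\ref{appvar}). First I would write, for a ball $du=[u-r,u+r]$ around $u$,
\be
P_{N,\beta}\{h_N\in du\}=\frac{1}{Z_N(\beta)}\int_{\{h_N\in du\}}e^{-\beta H_N(\om)}\,P_N(d\om),
\ee
and use the identity $H_N=Nh_N$ to replace $e^{-\beta H_N(\om)}=e^{-N\beta h_N(\om)}$ on the shell $\{h_N\in du\}$ by $e^{-N\beta u}$ up to a multiplicative factor bounded by $e^{N|\beta|r}$. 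This gives
\be
P_{N,\beta}\{h_N\in du\}=\frac{e^{-N\beta u+O(Nr)}}{Z_N(\beta)}\,P_N\{h_N\in du\}.
\ee

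Next I would insert the two asymptotic inputs. By Assumption A2, $P_N\{h_N\in du\}\asymp e^{Ns(u)}$ in the sense of (\ref{eqsu2}); and by Assumption A1 together with Varadhan's Theorem applied to $h_N$ and the linear function $u\mapsto -\beta u$, the partition function satisfies $\frac{1}{N}\ln Z_N(\beta)\to -\varphi(\beta)$ with $\varphi=s^*$ as in (\ref{eqlf1}), i.e.\ $Z_N(\beta)\asymp e^{-N\varphi(\beta)}$. Substituting and taking $N\to\infty$ and then $r\to 0$ yields
\be
\lim_{r\to 0}\lim_{N\to\infty}-\frac{1}{N}\ln P_{N,\beta}\{h_N\in du\}=\beta u-s(u)-\varphi(\beta)=J_\beta(u),
\ee
which is the stated rate function in the simplified form (\ref{eqldp3}). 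To upgrade this to the full LDP (\ref{eqldpb1})--(\ref{eqldpb2}), I would carry out the same estimate locally, covering a closed set $C$ (respectively an open set $O$) by finitely many small balls on each of which the tilting factor and the prior probability are controlled exactly as above; this is precisely the covering argument behind Varadhan's upper and lower bounds, and it gives $\limsup_N\frac{1}{N}\ln P_{N,\beta}\{h_N\in C\}\le\varphi(\beta)+\sup_{u\in C}\{s(u)-\beta u\}=-\inf_{u\in C}J_\beta(u)$, and symmetrically for $O$. Equivalently, one may simply invoke the standard theorem on large deviations under an exponential change of measure, of which this is the scalar case, and whose only hypothesis~--~finiteness of the limiting scaled cumulant generating function~--~is exactly Assumption A1.

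Finally I would verify that $J_\beta$ is a bona fide rate function: it is nonnegative because $\varphi(\beta)=\inf_u\{\beta u-s(u)\}\le\beta u-s(u)$ for every $u\in\dom s$ by (\ref{eqlf1}), and it is lower semicontinuous because $s$ is upper semicontinuous (being a rate function, by Assumption A2) while $\beta u-\varphi(\beta)$ is continuous. The main obstacle is the middle step: controlling the tilting factor $e^{-\beta H_N}$ uniformly over a level set $\{h_N\in C\}$ that need not be bounded, so that replacing $h_N$ by $u$ is legitimate in the $\limsup$/$\liminf$ bounds and not merely for infinitesimal balls. This is exactly where finiteness of $\varphi$ (Assumption A1) enters: it controls the contribution of the unbounded tails of $h_N$ and is the hypothesis under which Varadhan's Theorem, and hence the exponential-tilting LDP, applies.
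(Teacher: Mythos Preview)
Your proposal is correct and is essentially the same argument as the paper's: the paper simply invokes Theorem~\ref{thmtvar1} (the LDP for exponentially tilted measures) with $I=-s$, $F(u)=-\beta u$, noting that Assumption~A1 supplies the required finiteness $\varphi(\beta)<\infty$. Your hands-on shell computation and covering argument amount to re-deriving that theorem in this special case, and you explicitly recognize this in your final paragraph; the only difference is that the paper skips directly to the black-box invocation.
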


This result is proved heuristically in \cite{touchette2003} and Example 5.5 of \cite{touchette2009}. A rigorous proof is given next based on Varadhan's Theorem (see Appendix~\ref{appvar}) and applies for a general function $h_{N}$ defined in a Polish space. For a similar result obtained for observables other than the mean energy, see Theorem 4.1 of \cite{lewis1995}.

\begin{proof}
This result follows directly from Theorem~\ref{thmtvar1} of Appendix~\ref{appvar} with the following substitutions: $-s(u)$ takes the role of $I(x)$, $P_{N,\beta}$ takes the role of $P_{n,F}$, and $F(h_N)=-\beta h_N$. Although the latter function is not bounded, we have $\varphi(\beta)<\infty$ by Assumption A1. Therefore, the result of this theorem applies and yields that $h_N$ satisfies the LDP with respect to $P_{N,\beta}$ with rate function
\be
\beta u -s(u)-\inf_{u}\{\beta u -s(u)\}=\beta u-s(u)-\varphi(\beta).
\ee
\end{proof}

Following the logarithmic notation introduced earlier, we express the LDP of Proposition~\ref{rescanldp1} as
\be
P_{N,\beta}\{h_N\in du\}\asymp e^{-NJ_\beta(u)}\, du
\ee 
to emphasize the exponential concentration of the canonical measure of $h_N$ and the fact that the canonical equilibrium values of $h_N$ must correspond to the zeros of $J_\beta$. Let $\cU_\beta$ denote the set of these equilibrium values obtained for a given inverse temperature $\beta$, that is,
\be
\cU_\beta=\{u\in\reals:J_\beta(u)=0\}.
\ee
Note that the minimizers of $J_\beta$ are necessarily in $\dom s$. From the explicit form of this rate function, we obtain the following relation between the elements of $\cU_\beta$ and the entropy:

\begin{proposition}
\label{lemu1}
Assume A1-A2. Then $u\in\cU_\beta$ if and only if $\beta\in\partial s(u)$. 
\end{proposition}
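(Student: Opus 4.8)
The plan is to collapse both sides of the stated equivalence onto the single identity $\beta u - s(u) = \varphi(\beta)$ and then recognise this as the equality case of the Fenchel--Young inequality, i.e.\ the defining property of the subdifferential recalled in Appendix~\ref{appconc} (equivalently Theorem~23.5 of \cite{rockafellar1970}, Theorem~A.4 of \cite{costeniuc2005}). By Assumptions~A1--A2 and (\ref{eqlf1}) we have $\varphi=s^{*}$, so $u\in\cU_\beta$ means $J_\beta(u)=\beta u-s(u)-\varphi(\beta)=0$; since any minimiser of $J_\beta$ lies in $\dom s$ (as already noted) and $\varphi(\beta)$ is finite by A1, this is the meaningful finite statement $\beta u-s(u)=\varphi(\beta)=\inf_{v\in\reals}\{\beta v-s(v)\}$.

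For the ``only if'' direction I would assume $J_\beta(u)=0$. Then $\beta u-s(u)\le \beta v-s(v)$ for every $v\in\reals$, which rearranges to $s(v)\le s(u)+\beta(v-u)$ for all $v$; this is precisely the statement that $v\mapsto s(u)+\beta(v-u)$ is a supporting line of $s$ at $u$, i.e.\ $\beta\in\partial s(u)$ in the concave sense of Appendix~\ref{appconc}. The ``if'' direction is the same computation run backwards: if $\beta\in\partial s(u)$ then $s(v)\le s(u)+\beta(v-u)$ for all $v$, hence $\beta v-s(v)\ge \beta u-s(u)$ for all $v$, and taking the infimum over $v$ (attained at $v=u$) gives $\varphi(\beta)=\beta u-s(u)$, i.e.\ $J_\beta(u)=0$, i.e.\ $u\in\cU_\beta$.

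Both implications are elementary unfoldings of definitions, so there is no genuine obstacle here; the only points needing care are bookkeeping ones: that $\varphi(\beta)$ is finite for every $\beta$ (Assumption~A1), so that the subtraction in $J_\beta$ makes sense; that one works with $u\in\dom s$, automatic for minimisers of $J_\beta$, so that $s(u)$ is finite; and that $\partial s(u)$ is understood as the superdifferential adapted to the concave function $s$, consistently with Appendix~\ref{appconc}. Alternatively, one can simply cite the equality case of Fenchel--Young, $\beta\in\partial s(u)\iff s(u)+s^{*}(\beta)=\beta u$, as the single line that proves the proposition.
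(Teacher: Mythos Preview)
Your proof is correct and is essentially identical to the paper's: both directions are proved by unfolding the subdifferential inequality $s(v)\le s(u)+\beta(v-u)$ and rearranging it to $\beta u-s(u)\le \beta v-s(v)$, which is the statement that $u$ minimises $J_\beta$. Your additional framing via the Fenchel--Young equality case and the explicit remarks on finiteness of $\varphi(\beta)$ and $s(u)$ are helpful but do not change the underlying argument.
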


\begin{proof}
We first prove the necessary part of this result. Let $u\in\dom s$ and assume that $\beta\in\partial s(u)$. Then, by the definition of subdifferentials (see Appendix~\ref{appconc}), we have
\be
s(v)\leq s(u)+\beta(v-u)
\ee
for all $v\in\reals$. Equivalently, 
\be
\beta u-s(u)\leq \beta v-s(v),
\ee
which implies from (\ref{eqjb1}) that $u$ is a global minimum of $J_\beta(u)$. Therefore, $u\in\cU_\beta$.

For the sufficiency part, choose $u\in\cU_\beta$, so that $J_\beta(u)=0$. Since $J_\beta(v)\geq 0$ for all $v\in\reals$, we must have
\be
\beta u-s(u)\leq \beta v-s(v)
\ee
for all $v$, which implies $\beta\in\partial s(u)$ by definition of subdifferentials.
\end{proof}

The physical meaning of Proposition~\ref{lemu1} is clear by considering two cases~\cite{touchette2004b}:
\begin{enumerate}
\item If $s$ is strictly concave and differentiable at $u$ (see Appendix~\ref{appconc} for the definition of strictly concave), then $\cU_\beta=\{u\}$ for $\beta=s'(u)$, which means that $u$ is the unique equilibrium mean energy in the canonical ensemble at $\beta$. In this case, Gibbs's reasoning is valid: there is equivalence of ensembles in the expected physical way with the temperature-entropy relation $\beta=s'(u)$ arising from the (involutive) Legendre transform between $\varphi$ and $s$.

\item If $s$ is nonconcave at $u$, then $u\notin \cU_\beta$ for any $\beta\in\reals$. In this case, illustrated in Fig.~\ref{fignoneq2}, Gibbs's reasoning does not work: there is nonequivalence of ensembles because there is no inverse temperature in the canonical ensemble that yields $u$ as the equilibrium mean energy in the canonical ensemble. 
\end{enumerate}

The second case also implies, as illustrated in Fig.~\ref{fignoneq2}, that the nonconcave region of $s$ is `skipped over' by the canonical mean energy $u_{\beta}$ and, therefore, that there must be a discontinuous phase transition in the canonical ensemble. For more precise results on this relation between nonequivalent ensemble and canonical first-order phase transitions, see \cite{touchette2005a}, Theorem 4.10 of \cite{ellis2000} or Sec.~5.5 of \cite{touchette2009}. When $s(u)$ is twice differentiable, a further relation can be proved between nonequivalent ensembles and energies where the microcanonical heat capacity, defined as
\be
c(u)=\frac{du}{dT(u)}=\frac{du}{d(s'(u)^{-1})}=-\frac{s'(u)^{2}}{s''(u)},
\label{eqheatcap1}
\ee
becomes negative; see Sec.~3.4 of \cite{touchette2003} and Theorem 2.2 of \cite{touchette2005a}. Finally, for examples of long-range systems having nonconcave entropies and nonequivalent ensembles, see \cite{campa2009}.

\begin{figure*}
\includegraphics{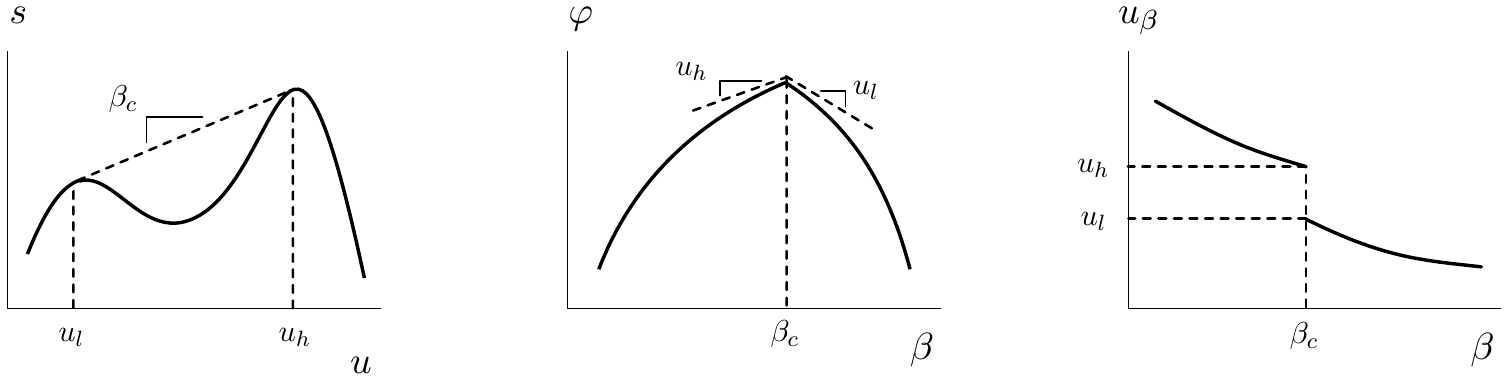}
\caption{Left: Nonconcave entropy $s(u)$. Center: Associated free energy $\varphi(\beta)$ having a nondifferentiable point at $\beta_c$. Right: Equilibrium mean energy $u_\beta$ in the canonical ensemble as a function of $\beta$: as $\beta$ is varied continuously, $u_\beta$ jumps over the nonconcave interval $(u_l,u_h)$, giving rise to a first-order phase transition in the canonical ensemble with a specific latent heat $\Delta u=u_h-u_l$.}
\label{fignoneq2}
\end{figure*}

\section{Macrostate equivalence}
\label{secmacro}

We now discuss the equivalence of ensembles at the deeper level of equilibrium macrostates. Our previous discussion of the equilibrium values of $h_N$ in the canonical ensemble is already a form of macrostate equivalence relating the elements of $\cU_\beta$ to the control parameter $u$ of the microcanonical ensemble. In this section, we study the macrostate level more generally by comparing the two equilibrium sets $\cE_\beta$ and $\cE^u$ for a general macrostate $M_N$.

This level of equivalence was studied for a class of general macrostates by Ellis, Haven and Turkington \cite{ellis2000} following previous results by Eyink and Spohn \cite{eyink1993}. Other authors \cite{deuschel1991,stroock1991,stroock1991b,roelly1993}, including Georgii \cite{georgii1993,georgii1994,georgii1995}, have derived important results about the equivalence of ensembles for an abstract, infinite-dimensional macrostate known in large deviation theory as the \emph{empirical process} or \emph{level 3 of large deviations} involving the relative entropy; see \cite{ellis1985,ellis1995,ellis1999} for details. Except for \cite{ellis2000}, however, all these works assume that entropy is concave, as mentioned in the introduction. Moreover, although other macrostates can be obtained in principle by contraction of the empirical process, it is very difficult in practice to use this contraction to derive LDPs for simple physical macrostates such as the magnetization and the energy.

In this section, we propose a simple approach to macrostate equivalence, which applies to general systems and macrostates and which shows that equivalence holds at this level under the same conditions as thermodynamic equivalence. The presentation follows the results announced in \cite{touchette2011b}, which generalize those of Ellis, Haven and Turkington \cite{ellis2000} to any classical $N$-body systems and macrostates $M_N$ under the following assumptions:
\begin{assumption}[Existence of equilibrium macrostates]\
\begin{enumerate}
\item[\emph{(A3)}] $M_N$ satisfies the LDP with respect to the canonical measure $P_{N,\beta}$ with rate function $I_\beta$ for all $\beta\in\dom \varphi$;
\item[\emph{(A4)}] $M_N$ satisfies the LDP with respect to the microcanonical measure $P_N^u$ with rate function $I^u$ for all $u\in\dom s$.
\end{enumerate}
\end{assumption}
These assumptions, which also require A1-A2, are obviously weak: they are there only to make sure that $\cE_\beta$ and $\cE^u$ exist, are non-empty (by lower semicontinuity of rate functions) and so can be compared. The problem of identifying classes of Hamiltonians for which these assumptions are verified is a  more difficult problem, which we do not address here. For long-range systems, this problem remains completely open.

\subsection{Canonical ensemble as a mixture of microcanonical ensembles}

The microcanonical and canonical ensembles are based on two obviously different probability measures on $\Lambda_N$: the former assigns a non-zero measure to microstates of a given energy, whereas the latter assigns a non-zero measure to all $\om\in\Lambda_N$. However, the two are fundamentally related, in that the canonical ensemble can be expressed as a `probabilistic mixture' of microcanonical ensembles. This is the key insight needed to obtain general results about macrostate equivalence.

To explain what we mean by a mixture of ensembles, consider the canonical probability measure $P_{N,\beta}(d\om)$ defined in (\ref{eqcan1}). Since this measure depends only the product $\beta H_N(\om)$, it is clear that all microstates having the same energy have the same probability. As a result, the conditional probability measure $P_{N,\beta}\{d\om|h_N \in u\}$ obtained by conditioning $P_{N,\beta}(d\om)$ on the set of microstates such that $h_N(\om)\in u$ must be `uniform' over that constrained set of microstates. This is obvious from the definition of this conditional measure:
\begin{eqnarray}
P_{N,\beta}\{ d\om|h_N\in du\} &=& \frac{P_{N,\beta}\{d\om,h_N\in du\}}{P_{N,\beta}\{h_N\in du\}}\nonumber\\
&=&\frac{e^{-\beta Nu}}{Z_N(\beta)}\frac{\id_{du}(h_N(\om))}{P_{N,\beta}\{h_N\in du\}}\, dP_N(\om),
\label{eqcondcan1}
\end{eqnarray}
where for the second line we have used the fact that $h_{N}=u$ in the limit $r\ra 0$. Thus, we see that $P_{N,\beta}\{ d\om|h_N\in du\} $ is proportional to $P^u_N(d\om)$, as defined in (\ref{eqmicro1}), so that 
\be
P_{N,\beta}\{ d\om|h_N\in du\}=P_N^u\{d\om\},
\label{eqcanmicro1}
\ee
for all $\om\in\Lambda_{N}$, since both measures are normalized to 1. Incidentally, normalizing (\ref{eqcondcan1}) yields
\be
P_{N,\beta}\{h_N\in du\}=\frac{e^{-\beta Nu }}{Z_N(\beta)} P_N\{h_N\in du\}.
\label{eqmicronorm1}
\ee
Taking the large deviation limit then gives the result of Proposition~\ref{rescanldp1}.

With the basic equality (\ref{eqcanmicro1}), it now follows from
\be
P_{N,\beta}(d\om)=\int_\reals P_{N,\beta}\{d\om | h_N=u\} \, P_{N,\beta}\{h_N\in du\},
\ee
that
\be
P_{N,\beta}(d\om)=\int_\reals P_{N}^u (d\om) \, P_{N,\beta}\{h_N\in du\}.
\label{eqmix1}
\ee
Applying this result to an arbitrary macrostate $M_N$ then yields
\be
P_{N,\beta}\{ M_N\in A\}=\int_{\reals} P_N^u\{M_N\in A\}\, P_{N,\beta}\{h_N\in du\}
\label{eqmicrocan2}
\ee
for any measurable set $A$. Hence, we see that the canonical measure on both $\Lambda_N$ and $\cM$ is a superposition of microcanonical measures weighted by the canonical mean energy distribution $P_{N,\beta}(du)=P_{N,\beta}\{h_N\in du\}$. It is this superposition, which is exact for any $N<\infty$, that we refer to as a probabilistic mixture of microcanonical ensembles. 

In what follows, we use this result to relate the equilibrium states of the microcanonical ensembles to those of the canonical ensemble. Already, it should be clear that (\ref{eqmicrocan2}) implies a link between the different LDPs of these ensembles: we know from Proposition~\ref{rescanldp1} that $P_{N,\beta}(du)$ satisfies the LDP with rate function $J_\beta(u)$, whereas $P_{N,\beta}\{ M_N\in A\}$ and $P_{N}^u\{ M_N\in A\}$ both satisfy the LDP by assumption. Exploiting the exponential form of these LDPs in the mixture integral (\ref{eqmicrocan2}), we then obtain the following result:

\begin{proposition}
\label{theoremmain}
Under Assumptions A1-A4, 
\be
I_\beta(m)=\inf_{u\in\reals}\{I^u(m)+ J_\beta(u)\}
\label{eqrep1}
\ee
for any $m\in\cM$ and $\beta\in\dom\varphi$. The minimizers in this formula are necessarily in $\dom s$.
\end{proposition}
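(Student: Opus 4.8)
The plan is to promote the exact mixture identity (\ref{eqmicrocan2}) to an asymptotic (Laplace) statement. Writing $A=B_r(m)$ and inserting the known exponential forms $P_N^u\{M_N\in dm\}\asymp e^{-NI^u(m)}\,dm$ (Assumption A4) and $P_{N,\beta}\{h_N\in du\}\asymp e^{-NJ_\beta(u)}\,du$ (Proposition~\ref{rescanldp1}), the right-hand side of (\ref{eqmicrocan2}) becomes a Laplace-type integral $\int_\reals e^{-N[I^u(m)+J_\beta(u)]}\,du$, whose leading exponential order is $e^{-N\inf_u\{I^u(m)+J_\beta(u)\}}$; since by Assumption A3 the left-hand side is $\asymp e^{-NI_\beta(m)}\,dm$, uniqueness of the lower semicontinuous rate function gives (\ref{eqrep1}). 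An equivalent, more structural route is to establish that $(M_N,h_N)$ obeys under the prior $P_N$ a joint LDP with rate $I^u(m)-s(u)$ — consistent with A2 governing the $h_N$-marginal and A4 the conditional law given $h_N\approx u$ — then to apply Varadhan's Theorem (Appendix~\ref{appvar}) to the tilt by $-\beta h_N$, in the same way as in the proof of Proposition~\ref{rescanldp1}, obtaining a joint LDP under $P_{N,\beta}$ with rate $I^u(m)-s(u)+\beta u-\varphi(\beta)=I^u(m)+J_\beta(u)$, and finally contracting onto $M_N$ via the contraction principle.

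To make the direct argument rigorous I would prove the matching large deviation bounds for $P_{N,\beta}\{M_N\in\,\cdot\,\}$ separately. For the lower bound it suffices to show $I_\beta(m)\le I^{u_0}(m)+J_\beta(u_0)$ for every $u_0\in\dom s$: restricting the mixture to the shell $\{h_N\in[u_0-\delta,u_0+\delta]\}$ and using the elementary decompositions (\ref{eqcanmicro1}) and (\ref{eqmicronorm1}) together with the bound $e^{-\beta N h_N}\ge e^{-N(\beta u_0+|\beta|\delta)}$ on that shell, the problem reduces to combining $N^{-1}\ln Z_N(\beta)\to-\varphi(\beta)$, the prior LDP lower bound $\liminf_N N^{-1}\ln P_N\{h_N\in(u_0-\delta,u_0+\delta)\}\ge s(u_0)$ (A2), and the microcanonical LDP lower bound for $\{M_N\in B_r(m)\}$ near $u_0$ (A4); letting $\delta\to0$, $r\to0$ and using that $J_\beta$ is lower semicontinuous (because $s$ is upper semicontinuous) yields the claim, and taking the infimum over $u_0$ gives ``$\le$'' in (\ref{eqrep1}). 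For the upper bound I would split $\reals$ into a compact window $[-L,L]$ and its complement: on the complement $P_N^u\{M_N\in\,\cdot\,\}\le1$ and the canonical LDP of $h_N$ (Proposition~\ref{rescanldp1}) controls the tail, provided one notes (under the usual exponential tightness of $h_N$ under $P_{N,\beta}$) that $\inf_{|u|\ge L}J_\beta(u)\to\infty$, while $[-L,L]$ is covered by finitely many $\delta$-intervals on each of which the microcanonical and $h_N$ upper bounds are applied, the mesh and then $L$ being sent to their limits. The statement that the minimizers in (\ref{eqrep1}) lie in $\dom s$ is immediate, since $u\notin\dom s$ forces $s(u)=-\infty$, hence $J_\beta(u)=+\infty$ and $I^u(m)+J_\beta(u)=+\infty$, so such $u$ cannot realise a finite infimum and dropping them leaves the infimum unchanged.

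The genuine obstacle is the non-uniformity in $u$ of the microcanonical LDP: Assumption A4 supplies the LDP for $P_N^u$ at each fixed $u$, but both bounds above need control of $P_N^u\{M_N\in B_r(m)\}$ that is uniform over thin energy shells, or equivalently a jointly lower semicontinuous rate for the pair $(M_N,h_N)$ under $P_N$. This is precisely where one must either invoke a mild regularity of $u\mapsto I^u$ (equivalently, of the thickened-shell measures $P_N^{u,\delta}$) or argue that A2 and A4 already force the joint LDP with rate $I^u(m)-s(u)$; the remaining ingredients — the Laplace/Varadhan bookkeeping, the $r\to0$ and $\delta\to0$ passages, and the $\dom s$ remark — are routine.
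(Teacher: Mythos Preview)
Your approach is essentially the paper's: use the mixture identity (\ref{eqmicrocan2}) to obtain a joint LDP for $(M_N,h_N)$ under $P_{N,\beta}$ with rate $I^u(m)+J_\beta(u)$, and then contract (equivalently, apply Laplace/Varadhan with $F=0$) onto the $M_N$-marginal, invoking uniqueness of rate functions. You are in fact more careful than the paper on the one nontrivial point---the paper simply asserts that the product $P_N^u\{M_N\in A\}\,P_{N,\beta}(du)$ satisfies the joint LDP ``by definition of the LDP,'' whereas you correctly flag that passing from a family of conditional LDPs (A4) and a marginal LDP to a joint LDP requires some uniformity in $u$ (or joint lower semicontinuity of $(m,u)\mapsto I^u(m)$), which is not explicitly part of A1--A4.
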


This result was first announced in \cite{touchette2011b} and is proved there heuristically following the argument based on the exponential form of (\ref{eqmicrocan2}) just mentioned. We give next a rigorous form of this argument based on the contraction principle and Varadhan's version of the Laplace principle.

\begin{proof}
By assumption, $M_N$ satisfies the LDP in the microcanonical ensemble with rate function $I^u$, while $h_N$ satisfies the LDP in the canonical ensemble with rate function $J_\beta$. The product $P_N^u \{M_N\in A\} P_{N,\beta}(du)$ in (\ref{eqmicrocan2}) therefore satisfies a joint LDP for $(M_{N},h_{N})$ with rate function $I^u+J_\beta$ by definition of the LDP. The marginalization of $h_{N}$ in this joint LDP, corresponding to the integral in (\ref{eqmicrocan2}), then yields to (\ref{eqrep1}) via the contraction principle \cite{dembo1998}.

Alternatively, we can apply the result (\ref{eqvar1}) with $F=0$ to the integral of (\ref{eqmicrocan2}) with the joint LDP to obtain the Laplace approximation
\be
\liminf_{N\ra\infty}-\frac{1}{N}\ln\int_\reals P_N^u \{M_N\in C\} P_{N,\beta}(du)\geq \inf_{m\in C}\inf_{u\in\reals}\{ I^u(m)+J_\beta(u)\}
\ee
for $C$ closed and
\be
\limsup_{N\ra\infty}-\frac{1}{N}\ln\int_\reals P_N^u \{M_N\in O\} P_{N,\beta}(du)\leq \inf_{m\in O}\inf_{u\in\reals}\{ I^u(m)+J_\beta(u)\}
\ee
for $O$ open. The result (\ref{eqrep1}) then follows because rate functions are unique \cite{ellis1985}. 

Both arguments work not only for $M_{N}\in\reals^{d}$ and $h_{N}\in\reals$, but also for $M_{N}$ and $h_{N}$ taking values in Polish spaces. Moreover, the infimum over $\reals$ can be restricted to $\dom s$, since $I^u$ and $J_\beta$ are by assumption infinite outside $\dom s$.
\end{proof}

Proposition~\ref{theoremmain} relates the fluctuations of the canonical and microcanonical ensembles for general macrostates. This result is interesting physically as it shows that these fluctuations depend on the system and macrostate considered, and so cannot be expected to be the same in general.\footnote{Consider the obvious example of the mean energy $h_N$, which does not fluctuate in the microcanonical ensemble but does in the canonical ensemble.} For this reason, one cannot speak of the equivalence of ensembles in terms of macrostate fluctuations \cite{yukalov2005}, only in terms of their equilibrium macrostates.

\subsection{Equivalence results}

The general result (\ref{eqrep1}) relates not only the fluctuations of the microcanonical and canonical ensembles, but also their equilibrium states. Since rate functions are nonnegative, $I_\beta(m)$ vanishes if and only if both $I^u(m)$ and $J_\beta(u)$ vanish in (\ref{eqrep1}). This implies that the equilibrium values of $M_N$ in the canonical ensemble must correspond to the equilibrium values of $M_N$ in the microcanonical ensemble for all mean energies realized at equilibrium in the canonical ensemble. This is stated in the next result.
\begin{proposition}
\label{theoremeqens1}
Under Assumptions A1-A4:
\be
\cE_\beta=\bigcup_{u\in\cU_\beta}  \cE^u.
\label{eqdec1}
\ee
\end{proposition}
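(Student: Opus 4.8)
The plan is to read the set identity (\ref{eqdec1}) directly off the variational representation (\ref{eqrep1}) of Proposition~\ref{theoremmain}, using only the nonnegativity of rate functions together with the definitions $\cE_\beta=\{m\in\cM:I_\beta(m)=0\}$, $\cE^u=\{m\in\cM:I^u(m)=0\}$ and $\cU_\beta=\{u\in\reals:J_\beta(u)=0\}$. No concavity of $s$ is used at this stage; concavity will enter only afterwards, when $\cU_\beta$ and the family $\{\cE^u\}$ are compared across energies.

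First I would dispose of the easy inclusion $\bigcup_{u\in\cU_\beta}\cE^u\subseteq\cE_\beta$. If $u\in\cU_\beta$ and $m\in\cE^u$, then $J_\beta(u)=0$ and $I^u(m)=0$, so (\ref{eqrep1}) gives $I_\beta(m)\le I^u(m)+J_\beta(u)=0$; since $I_\beta\ge 0$ this forces $I_\beta(m)=0$, i.e.\ $m\in\cE_\beta$.

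For the reverse inclusion $\cE_\beta\subseteq\bigcup_{u\in\cU_\beta}\cE^u$, take $m\in\cE_\beta$, so that $\inf_{u\in\reals}\{I^u(m)+J_\beta(u)\}=I_\beta(m)=0$. The crux is to exhibit a single energy $u^\star$ at which this infimum is attained: once $I^{u^\star}(m)+J_\beta(u^\star)=0$ with both summands nonnegative, each must vanish, hence $u^\star\in\cU_\beta$ and $m\in\cE^{u^\star}$, which places $m$ in the union. To get attainment I would either invoke the clause already recorded in Proposition~\ref{theoremmain} that the infimum in (\ref{eqrep1}) has minimizers (lying in $\dom s$), or argue directly: pick a minimizing sequence $u_n\in\dom s$ with $I^{u_n}(m)\to 0$ and $J_\beta(u_n)\to 0$, confine it to a compact set, and pass to a limit using the joint lower semicontinuity of $(u,m)\mapsto I^u(m)$ supplied by the joint LDP for $(M_N,h_N)$ in the proof of Proposition~\ref{theoremmain}, together with the lower semicontinuity of $J_\beta$ from Proposition~\ref{rescanldp1}.

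That compactness/attainment step is the only delicate point, and it is where I expect the main obstacle to lie. It is automatic when $\dom s$ is bounded~--~the generic physical case, since $\dom s$ is the range of $h_N$ and $s$ is upper semicontinuous there~--~and more generally it holds once the rate functions in play are good (have compact sublevel sets), a property that propagates through the contraction principle used to derive (\ref{eqrep1}) from the mixture (\ref{eqmix1}), so that $I_\beta$ inherits attainment of its defining infimum. I would therefore present the proof under this mild regularity (or simply cite the attainment clause of Proposition~\ref{theoremmain}), and close by emphasizing that (\ref{eqdec1}) is purely an identity about zeros of rate functions, valid regardless of whether $s$ is concave.
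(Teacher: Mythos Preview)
Your proposal is correct and follows essentially the same route as the paper: both inclusions are read off the variational formula (\ref{eqrep1}) using nonnegativity of rate functions, in the order you describe.

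You are in fact more careful than the paper on the one nontrivial point. In the paper's proof of the inclusion $\cE_\beta\subseteq\bigcup_{u\in\cU_\beta}\cE^u$, it is simply asserted that $\inf_u\{I^u(m)+J_\beta(u)\}=0$ implies the \emph{existence} of some $u\in\dom s$ with $I^u(m)=0$ and $J_\beta(u)=0$, without further justification. You correctly isolate this attainment step as the only delicate one and propose to justify it either by citing the clause in Proposition~\ref{theoremmain} or by a direct compactness/lower-semicontinuity argument. Note that the clause in Proposition~\ref{theoremmain}, read together with its proof, only says that any minimizer must lie in $\dom s$ (because $I^u$ and $J_\beta$ are infinite outside $\dom s$), not that a minimizer exists; so your second route---goodness of the joint rate function and lower semicontinuity propagated through the contraction principle---is the right way to make this rigorous, and it is a genuine addition over the paper's argument.
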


\begin{proof}
Take $m\in\cE_\beta$. Then $I_\beta(m)=0$ by definition of $\cE_\beta$, so that, by Proposition~\ref{theoremmain},
\be
0=\inf_u \{ I^u(m_\beta)+J_\beta(u)\}.
\ee
Since rate function are nonnegative, this implies that there exists $u\in\dom s$ such that $I^u(m)=0$, implying $m\in \cE^u$, and $J_\beta(u)=0$, so that $u\in \cU_\beta$. As this is true for all elements of $\cE_\beta$, we obtain
\be
\cE_\beta \subseteq \bigcup_{u\in \cU_\beta} \cE^u =\cE^{\cU_\beta}.
\label{eqincl1}
\ee

We now prove the reverse inclusion. Consider $u\in\cE_\beta$ for which $J_\beta(u)=0$ and $m\in\cE^u$ for which $I^u(m)=0$. Then the result of Proposition~\ref{theoremmain} gives $I_\beta(m)=0$, so that $m\in\cE_\beta$. As this is true for all $m\in\cE^u$ with $u\in\cU_\beta$, we obtain
\be
\bigcup_{u\in \cU_\beta} \cE^u \subseteq \cE_\beta.
\ee
Therefore, the two sides are equal.
\end{proof}

The covering result (\ref{eqdec1}) shows that the canonical equilibrium macrostates are always realized in the microcanonical ensemble for one or more values of $h_N$. To determine when $\cE^u$ coincides with $\cE_\beta$ for some $\beta$, we next use Proposition~\ref{lemu1} to determine whether $\cU_\beta$ has one element, many elements, or is empty. This leads us to the following result about macrostate equivalence, which is the main result of this section.

\begin{theorem}[Macrostate equivalence]
\label{theoremellis2}
Assume A1-A4. Then
\begin{enumerate}
\item[\emph{(a)}] \emph{Strict equivalence:} If $s$ is strictly concave at $u$, then $\cE^u=\cE_\beta$ for some $\beta\in\reals$;
\item[\emph{(b)}] \emph{Nonequivalence:} If $s$ is nonconcave at $u$, then $\cE^u\neq \cE_\beta$ for all $\beta\in\reals$;
\item[\emph{(c)}] \emph{Partial equivalence:} If $s$ is concave but not strictly concave at $u$, then $\cE^u\subseteq \cE_\beta$.
\end{enumerate}
\end{theorem}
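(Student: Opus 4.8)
The plan is to derive all three parts from the covering identity $\cE_\beta=\bigcup_{u\in\cU_\beta}\cE^u$ of Proposition~\ref{theoremeqens1}, combined with the characterization $u\in\cU_\beta\iff\beta\in\partial s(u)$ of Proposition~\ref{lemu1} and elementary facts from convex analysis about supporting lines (collected in Appendix~\ref{appconc}). The key observation is that the structure of $\cU_\beta$ — whether it is a singleton, a larger set, or empty — is dictated entirely by the local concavity behaviour of $s$ at $u$, so the three cases in the theorem correspond exactly to the three geometric pictures in Fig.~\ref{fignoneq1}.

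For part (a), if $s$ is strictly concave at $u$ then $s$ admits a supporting line at $u$ with slope $\beta\in\partial s(u)$ that touches the graph of $s$ \emph{only} at $u$; hence by Proposition~\ref{lemu1} we have $u\in\cU_\beta$, and strictness forces $\cU_\beta=\{u\}$ (any other $v\in\cU_\beta$ would lie on the same supporting line, contradicting strict concavity at $u$). Feeding $\cU_\beta=\{u\}$ into the covering identity gives $\cE_\beta=\cE^u$ immediately. For part (c), if $s$ is concave but not strictly concave at $u$, then $s$ still admits a supporting line at $u$ (concave points admit supporting lines away from the boundary), so again $\beta\in\partial s(u)$ exists and $u\in\cU_\beta$; but now the supporting line may touch $s$ at other points $v$, so $\cU_\beta\supseteq\{u\}$ possibly strictly. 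The covering identity then only yields $\cE^u\subseteq\bigcup_{v\in\cU_\beta}\cE^v=\cE_\beta$, which is precisely the partial-equivalence claim. For part (b), if $s$ is nonconcave at $u$, i.e. $s(u)<s^{**}(u)$, then $s$ admits no supporting line at $u$, so by Proposition~\ref{lemu1} there is no $\beta$ with $u\in\cU_\beta$, i.e. $u\notin\cU_\beta$ for every $\beta\in\reals$. I then need to upgrade "$u\notin\cU_\beta$" to "$\cE^u\neq\cE_\beta$", and this is the one place where a small extra argument is required, not just set-theoretic bookkeeping.

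The main obstacle is exactly that last step in (b): $u\notin\cU_\beta$ does not by itself prevent $\cE^u$ and $\cE_\beta$ from happening to coincide (e.g. if some other energy $v\in\cU_\beta$ had $\cE^v=\cE^u$). To rule this out I would use the mean energy $h_N$ itself as a macrostate. In the microcanonical ensemble at mean energy $u$, the only equilibrium value of $h_N$ is $u$ — indeed $P_N^u$ concentrates $h_N$ on $u$ by construction, so the rate function for $h_N$ under $P_N^u$ vanishes only at $u$ (this is a trivial LDP, covered by A2 together with the definition of $P_N^u$). In the canonical ensemble, the equilibrium values of $h_N$ form the set $\cU_\beta$ by Propositions~\ref{rescanldp1}--\ref{lemu1}. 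If we had $\cE^u=\cE_\beta$ as sets of equilibrium macrostates for \emph{all} macrostates — and in particular $h_N$ is among the macrostates we are comparing — then the equilibrium $h_N$-values would have to match, forcing $u\in\cU_\beta$, a contradiction. One subtlety to address cleanly is whether the theorem's $\cE^u\neq\cE_\beta$ is meant macrostate-by-macrostate or uniformly; the natural reading (and the one consistent with Ellis--Haven--Turkington) is that nonequivalence means the two ensembles are not equivalent as descriptions, i.e. there exists a macrostate — concretely $h_N$ — on which they disagree, which is exactly what the argument above delivers.

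Finally, for the boundary points of $\dom s$ I would simply note, as in Proposition~3, that the supporting-line dichotomy may fail there, so parts (a)--(c) are understood (as throughout Sec.~\ref{secthermo}) to hold away from boundary points of $\dom s$; alternatively the boundary case can be absorbed by the convention that $s$ is concave at $u$ iff $s(u)=s^{**}(u)$, which is what the statement in fact uses. With that caveat in place, the proof is a short assembly of Propositions~\ref{theoremeqens1} and~\ref{lemu1} with the convex-analysis facts of Appendix~\ref{appconc}, plus the one-line LDP for $h_N$ under $P_N^u$ to close part (b).
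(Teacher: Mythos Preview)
Parts (a) and (c) are handled exactly as in the paper: both follow from the covering identity of Proposition~\ref{theoremeqens1} once Proposition~\ref{lemu1} determines whether $\cU_\beta$ is the singleton $\{u\}$ or a larger set containing $u$.

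Part (b) is where your approach departs from the paper, and there is a genuine gap. The theorem is stated for a \emph{fixed} macrostate $M_N$---the one in Assumptions A3--A4---and the sets $\cE^u,\cE_\beta$ refer to \emph{its} equilibrium values. Your proposal to switch to $h_N$ as the macrostate establishes only that the two ensembles disagree on the equilibrium values of $h_N$; it says nothing about the given $M_N$. The reinterpretation you offer (``nonequivalence means there exists some macrostate on which the ensembles disagree'') is not what the theorem asserts, nor is it the Ellis--Haven--Turkington reading. The paper instead argues directly for the given $M_N$ and in fact obtains the stronger conclusion $\cE^u\cap\cE_\beta=\emptyset$: taking any $m\in\cE^u$ and assuming $m\in\cE_\beta$, it invokes the variational relation (\ref{eqrep1}) (equivalently the covering (\ref{eqdec1})) to deduce $u\in\cU_\beta$, contradicting $\partial s(u)=\emptyset$. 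You are right that this implication is not pure set-theoretic bookkeeping---the covering by itself does not exclude $m\in\cE^v$ for some other $v\in\cU_\beta$---and the paper is terse at precisely this point; but its route stays with $M_N$ and (\ref{eqrep1}), rather than substituting a different macrostate.
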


\begin{proof}
Case (a): This follows from the result stated after Proposition~\ref{lemu1} that, if $s$ is strictly concave at $u$, then $\cU_\beta$ is the singleton set $\{u\}$ for $\beta\in\partial s(u)$. From the covering result (\ref{eqdec1}), we then obtain $\cE_\beta=\cE^u$ for all $\beta\in\partial s(u)$. 

Case (b):
The assumption that $s$ is nonconcave at $u$ implies also from Proposition~\ref{lemu1} that $u\notin\cU_\beta$ for all $\beta\in\reals$. Let $m^u\in\cE^u$ and assume that $m^u\in\cE_\beta$ for some $\beta\in\reals$. Then using (\ref{eqdec1}), or equivalently the relation (\ref{eqrep1}), we must have $u\in\cU_\beta$, which contradicts the result that $u\notin\cU_\beta$. Since this contradiction is reached for any $m^u\in\cE^u$ and any $\beta\in\reals$, we conclude that $\cE^u\cap\cE_\beta=\emptyset$ for all $\beta\in\reals$, a result which we write as $\cE^u\neq \cE_\beta$ for all $\beta$.

Case (c):
If $s$ is concave at $u$ but non-strictly concave, then $u\in\cU_\beta$ for $\beta\in\partial s(u)$, but $\cU_\beta$ is no longer a singleton: by definition of non-strict concave points, there must exist at least one $v\neq u$ for which $\beta\in\partial s(v)$ and so for which $v\in\cU_\beta$. In this case, the covering result (\ref{eqdec1}) involves at least two sets, which implies that $\cE^u\subseteq \cE_\beta$ in general. If, as in most systems, $\cE^v\neq\cE^u$, then this inclusion is strengthened to $\cE^u\subsetneq \cE_\beta$, that is, $\cE^u$ is a proper subset of $\cE_\beta$.
\end{proof}

Cases (a) and (b) have clear interpretations in terms of Gibbs's reasoning \cite{touchette2004b}. 

In case (a), the microcanonical and canonical ensemble are equivalent at the macrostate level because the mean energy of the latter ensemble is concentrated on a single value corresponding to $u$ for $\beta\in\partial s(u)$. For $s(u)$ differentiable, $\beta$ and $u$ are then related by the standard thermodynamic relation $\beta=s'(u)$, as already mentioned after Proposition~\ref{lemu1}. 

In case (b), we have nonequivalence because $u$ is never realized in the canonical ensemble as an equilibrium mean energy, so that the set $\cE^u$, which can be realized in the microcanonical ensemble by fixing $h_N=u$, cannot be realized in the canonical ensemble by varying $\beta$ instead. In this case, it can in fact be proved, under additional assumptions, that the elements of $\cE^{u}$ correspond either to unstable or metastable critical points of $I_{\beta}(m)$, depending on the sign of the microcanonical heat capacity $c(u)$ defined in (\ref{eqheatcap1}); see \cite{touchette2005a} for more details.

Case (c) is more subtle: it arises when $\cU_\beta$ has more than one element, and so when the canonical ensemble has many coexisting equilibrium mean energies, giving rise at the macrostate level to many coexisting equilibrium macrostates, called \emph{phases} in statistical mechanics. The next result, which follows from the theorem above, shows that this naturally arises whenever $s(u)$ is nonconcave or has some linear parts.

\begin{corollary} If $s(u)$ is nonconcave or is non-strictly concave, then there exists $\beta_c\in\reals$ such that $\cE_{\beta_c}$ is composed of two or more microcanonical sets $\cE^u$ with $u\in\cU_{\beta_c}$, i.e., $\cE_{\beta_c}=\cE^u\cup\cE^{u'}\cup\cdots$, with $u,u',\ldots\in\cU_{\beta_c}$.
\end{corollary}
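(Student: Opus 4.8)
The plan is to reduce the statement to the covering formula of Proposition~\ref{theoremeqens1} together with the characterization of $\cU_\beta$ in Proposition~\ref{lemu1}; the only genuine work is a convex-analysis step producing one inverse temperature $\beta_c$ whose supporting line touches $s$ at two distinct energies.

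First I would extract an affine piece of $s^{**}$ from the hypothesis. If $s$ is nonconcave, choose $u_0\in\dom s$ with $s(u_0)<s^{**}(u_0)$. Because $s^{**}$ is the concave envelope of $s$, there is a nondegenerate interval $[u_l,u_h]$ with $u_0\in(u_l,u_h)$ on which $s^{**}$ is affine with some slope $\beta_c$, and -- away from boundary points of $\dom s$ -- the envelope meets $s$ at the endpoints, $s(u_l)=s^{**}(u_l)$ and $s(u_h)=s^{**}(u_h)$. If instead $s$ is concave everywhere but non-strictly concave at some point, then by the definition in Appendix~\ref{appconc} there is a supporting line to $s=s^{**}$ of some slope $\beta_c$ touching $s$ at two distinct points $u_l\neq u_h$. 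In either case the line $v\mapsto\beta_c v-\varphi(\beta_c)$ is a supporting line to $s$ at both $u_l$ and $u_h$, hence $\beta_c\in\partial s(u_l)$ and $\beta_c\in\partial s(u_h)$.

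By Proposition~\ref{lemu1} this gives $u_l,u_h\in\cU_{\beta_c}$, so $\cU_{\beta_c}$ has at least two elements. Substituting into the covering result of Proposition~\ref{theoremeqens1} yields
\be
\cE_{\beta_c}=\bigcup_{u\in\cU_{\beta_c}}\cE^u\ \supseteq\ \cE^{u_l}\cup\cE^{u_h},
\ee
which displays $\cE_{\beta_c}$ as a union of two or more microcanonical equilibrium sets indexed by $u\in\cU_{\beta_c}$, exactly the claim. (If one wants the two contributions to be genuinely different sets rather than merely different labels, one appends the mild hypothesis $\cE^{u_l}\neq\cE^{u_h}$, as already noted in the proof of Theorem~\ref{theoremellis2}(c).)

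The main obstacle is the first step: showing rigorously that a nonconcave point of $s$ forces $s^{**}$ to be affine on a genuine interval whose endpoints are contact points with $s$. This needs the extremum in the definition of $s^{**}$ to be attained, which follows from upper semicontinuity of $s$ plus compactness of the relevant sublevel sets, and it needs the usual care at boundary points of $\dom s$ -- both handled exactly as in the convex-analysis statements of Appendix~\ref{appconc} used earlier in this section. Once that contact structure is in hand, the remainder is a one-line application of results already proved.
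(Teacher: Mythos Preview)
Your approach is correct and is essentially the one the paper intends: the paper simply states that the corollary ``follows from the theorem above'' (Theorem~\ref{theoremellis2}) and illustrates the nonconcave case with Fig.~\ref{fignoneq2}, leaving the convex-analysis step implicit. You have unpacked exactly that step---extracting from either hypothesis a $\beta_c$ with $|\cU_{\beta_c}|\geq 2$ via Proposition~\ref{lemu1}, then invoking the covering formula of Proposition~\ref{theoremeqens1}---which is precisely the content of case~(c) in the proof of Theorem~\ref{theoremellis2}.
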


Figure~\ref{fignoneq2} illustrates the case, commonly encountered in long-range systems, in which two phases appear at some critical inverse temperature $\beta_c$ due to the nonconcavity of $s(u)$ over some interval $(u_l,u_h)$, leading to $\cU_{\beta_c}=\{u_l,u_h\}$ and $\cE_{\beta_c}=\cE^{u_l}\cup \cE^{u_h}$. The relation between the nonconcave region of $s(u)$ and the nondifferentiability of $\varphi(\beta)$ is illustrated in Fig.~\ref{fignoneq2}. If $s$ has linear or affine parts over $(u_l,u_h)$, then a similar physical interpretation involving a first-order phase transition also applies, but with discrete phases replaced by a continuum of phases \cite{touchette2009}.

\subsection{Comparison with previous results}

Proposition~\ref{theoremeqens1} and Theorem~\ref{theoremellis2} above are generalizations of two results obtained by Ellis, Haven and Turkington \cite{ellis2000}. We have derived these results above by assuming that $\cE_\beta$ and $\cE^u$ exist and by using the idea of probabilistic mixture of microcanonical ensembles to relate these sets. Ellis, Haven, and Turkington use a different approach: they explicitly construct the rate functions $I_\beta$ and $I^u$ and then relate with these $\cE_\beta$ and $\cE^u$. In doing so, they assume the following:
\begin{enumerate}
\item There exists a function $\rh:\cM\ra\reals$ such that
\be
\lim_{N\ra\infty} \big|h_N(\om)-\rh(M_N(\om))\big|=0
\ee
uniformly over all $\om\in\Lambda_N$. In this case, we say that $h_N$ admits an \emph{energy representation function} in terms of $M_N$. 
\item $M_N$ satisfies the LDP with respect to the prior measure $P_N$.
\end{enumerate}

The explicit expressions of $I_\beta$ and $I^u$ obtained under these assumptions can be found in Theorem 2.4 and Theorem 3.2 of \cite{ellis2000}, respectively. In terms of $\rh$, our Theorem~\ref{theoremeqens1} then corresponds to Theorem 4.10 of \cite{ellis2000}, which has the form
\be
\cE_\beta=\bigcup_{u\in\rh (\cE_\beta)}  \cE^u,
\label{eqdec2}
\ee
while our Theorem~\ref{theoremellis2} corresponds essentially to Theorem 4.4 of \cite{ellis2000}. The difference between the two sets of results is that $\cU_\beta$ is replaced by $\rh(\cE_\beta)$.

It can be shown directly using the contraction principle \cite{touchette2009} that if $\rh$ exists, then $\cU_\beta=\rh(\cE_\beta)$  so that the covering results of (\ref{eqdec1}) and (\ref{eqdec2}) are equivalent. What we have shown here is that (\ref{eqdec1}) does not require any of the assumptions above to hold. Consequently, these must be sufficient but not necessary conditions for macrostate equivalence to be related to thermodynamic equivalence. This relation is based only on the existence of thermodynamic functions and equilibrium macrostates, as proved above, and is as such a general result of statistical mechanics.

This generalization is important, as there are many physical systems and macrostates of interest that do not admit an energy representation function. In fact, except for the mean energy $h_{N}$ itself and the empirical process, which can be used to construct $\rh$ for \emph{any} system, pairs $(M_{N},\rh)$ exist only for particular macrostates of non-interacting, mean-field, and some long-range systems; see \cite{eyink1993,ellis2000,campa2009}. An obvious example is the magnetization, which does not admit an energy representation function for short-range systems, such as the 2D Ising model. Similarly, the empirical measure cannot be used to construct $\rh$ for short-range systems and most long-range systems, including gravitating particles \cite{campa2009}. In these cases, equivalence of ensembles can be inferred from our results without $\rh$. For the 2D Ising model, for example, we recover the known result that ensembles are equivalent at the magnetization level \cite{lewis1994}.

\section{Measure equivalence}
\label{secmeas}

The last level of ensemble equivalence that we discuss is concerned with the convergence of $P_{N,\beta}(d\om)$ and $P_{N}^u(d\om)$ at the \emph{micro} rather than \emph{macro}state level. This equivalence is suggested mathematically by the fact that macrostate equivalence (in the strict concavity case) implies 
\be
\lim_{N\ra\infty} E_{P_{N,\beta}}[M_N]=\lim_{N\ra\infty} E_{P_N^u}[M_N]
\ee
for \emph{any} macrostate satisfying Assumptions A1-A4. This mean convergence result is close to the notion of weak convergence and suggests that $P_{N,\beta}$ should converge to $P_N^u$ as $N\ra\infty$ with respect to a `norm' or `metric' that is sensitive to their large deviation properties.

In this section, we consider two such `metrics', the \emph{specific relative entropy} and \emph{specific action}, and show that measure equivalence holds in both cases when $s(u)$ is concave, and thus when there is thermodynamic and macrostate equivalence. Our results for the specific relative entropy are essentially those of Lewis, Pfister and Sullivan \cite{lewis1994a,lewis1994,lewis1995} (see also \cite{stroock1991,stroock1991b,deuschel1991}). New and stronger results are obtained for the specific action, which point interestingly to a general form of the asymptotic equipartition property studied in information theory and the theory of ergodic processes \cite{cover1991}.

\subsection{Relative entropy}

Let $P$ and $Q$ be two probability measures defined on a space $\cX$, and assume that $P$ is absolutely continuous with respect to $Q$ (denoted by $P\ll Q$). The \emph{relative entropy} of $P$ with respect to $Q$ is defined as
\be
D(P||Q)=\int_{\cX} dP(\om) \ln \frac{dP}{dQ}(\om),
\ee
where $dP/dQ$ denotes the \emph{Radon-Nikodym derivative} of $P$ with respect to $Q$. The relative entropy is also called the information gain  \cite{lewis1994a,lewis1994,lewis1995}, the information divergence \cite{csiszar1975,csiszar1984,csiszar1998} or Kullback-Leibler distance \cite{cover1991}. 
Strictly speaking, $D(P||Q)$ is not a distance, since it is not symmetric and does not satisfy the triangle inequality. However, $D(P||Q)\geq 0$ with equality if and only if $P=Q$ almost everywhere \cite{cover1991}. Therefore, it can be interpreted as a generalized metric inducing a well-defined topology on the space of distributions. Moreover, it is known that $D(P||Q)$ is an upper bound on the total variation norm:
\be
d_{TV}(P,Q)=\frac{1}{2}\int_{\cX}\left| dQ-dP\right|\leq \sqrt{D(P|| Q)};
\ee
see, for example, Proposition 10.3 of \cite{lewis1995}.

For the microcanonical and canonical ensembles, we have $P_N^u\ll P_{N,\beta}$ but $P_{N,\beta}\not\ll P_N^u$, since $P_{N}^u$ is a restriction of $P_{N,\beta}$, so that the correct relative entropy to consider is
\be
D(P^u_N||P_{N,\beta})=\int_{\Lambda_N} dP_N^u(\om)\ln \frac{dP_N^u}{dP_{N,\beta}}(\om).
\ee
From this, we define the \emph{specific relative entropy} by the limit
\be
d_\beta^u=\lim_{N\ra\infty} \frac{1}{N}D(P^u_N||P_{N,\beta}).
\label{eqsre1}
\ee
This quantity, when it exists, is also called the relative entropy rate, the specific information gain \cite{lewis1994a,lewis1994,lewis1995} or divergence rate \cite{shields1993}. We use it next to give a first definition of measure equivalence due to Lewis, Pfister and Sullivan \cite{lewis1994a,lewis1994,lewis1995}.

\begin{definition}[Measure equivalence I]
The canonical and microcanonical ensembles are said to be equivalent at the measure level, in the specific relative entropy sense, if $d_\beta^u=0$.
\end{definition}

It is clear from this definition that $d_\beta^u=0$ does not imply $P_N^u(d\om)=P_{N,\beta}(d\om)$ for almost all $\om\in\Lambda_N$; it only implies that $D(P_N^u||P_{N,\beta})$ grows slower than $N$ and, consequently, that the total variation $d_{TV}(P_N^u,P_{N,\beta})$ grows slower than $\sqrt{N}$ as $N\ra\infty$. This, as shown next, is a necessary and sufficient condition for measure equivalence to coincide with thermodynamic and macrostate equivalence.

\begin{theorem}[Measure equivalence I]
\label{thmmeq1}
Under Assumptions A1-A4, $d_\beta^u=0$ if and only if $\beta\in \partial s(u)$. Therefore, except possibly at boundary points of $\dom s$, measure equivalence holds in the specific relative entropy sense if and only if thermodynamic equivalence holds.
\end{theorem}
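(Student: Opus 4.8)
The plan is to compute the specific relative entropy $d_\beta^u$ explicitly in terms of $s$ and $\varphi$, and then read off when it vanishes. The starting point is the chain of identities already established in the excerpt: by (\ref{eqcanmicro1}) the microcanonical measure $P_N^u$ equals the canonical measure conditioned on $\{h_N\in du\}$, and by (\ref{eqmicronorm1}) the Radon--Nikodym derivative is
\be
\frac{dP_N^u}{dP_{N,\beta}}(\om)=\frac{\id_{du}(h_N(\om))}{P_{N,\beta}\{h_N\in du\}}=\frac{Z_N(\beta)\, e^{\beta N u}}{P_N\{h_N\in du\}}\,\id_{du}(h_N(\om)).
\ee
Since $P_N^u$ is supported on the shell $\{h_N\in du\}$, the logarithm of this derivative is constant $P_N^u$-almost everywhere, so the integral defining $D(P_N^u\|P_{N,\beta})$ collapses to that constant:
\be
\frac{1}{N}D(P_N^u\|P_{N,\beta})=\frac{1}{N}\ln Z_N(\beta)+\beta u-\frac{1}{N}\ln P_N\{h_N\in du\}.
\ee

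First I would take $N\to\infty$ in this exact expression. By Assumption A1 and the definition (\ref{eqcanfr1}), $\tfrac{1}{N}\ln Z_N(\beta)\to-\varphi(\beta)$; by Assumption A2 and (\ref{eqsu1})--(\ref{eqsu2}), $\tfrac{1}{N}\ln P_N\{h_N\in du\}\to s(u)$ (here the $r\to 0$ limit following $N\to\infty$ is the one implicit in the $du$ notation, exactly as in the definition of $s$). Hence the limit defining $d_\beta^u$ in (\ref{eqsre1}) exists and
\be
d_\beta^u=\beta u-s(u)-\varphi(\beta)=J_\beta(u),
\ee
which is precisely the canonical rate function of $h_N$ from Proposition~\ref{rescanldp1}. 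So the specific relative entropy is nothing but the large-deviation cost of the energy $u$ in the canonical ensemble at inverse temperature $\beta$.

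Given this, the theorem follows immediately from facts already proved. Since $J_\beta\ge 0$ everywhere (it is a rate function), $d_\beta^u=0$ iff $u$ is a zero of $J_\beta$, i.e. iff $u\in\cU_\beta$; and by Proposition~\ref{lemu1}, $u\in\cU_\beta$ iff $\beta\in\partial s(u)$. For the final sentence, invoke the Proposition on geometric characterization of thermodynamic equivalence: away from boundary points of $\dom s$, $s$ admits a supporting line at $u$ (equivalently $\partial s(u)\neq\emptyset$ with slope $\beta$) precisely when the microcanonical ensemble at $u$ is thermodynamically equivalent to the canonical ensemble at that $\beta$. Thus $d_\beta^u=0$ for some $\beta$ iff thermodynamic equivalence holds at $u$.

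The one point requiring care — the ``main obstacle,'' though it is minor — is the interchange of the $N\to\infty$ and $r\to 0$ limits hidden in the $du$ notation, and the legitimacy of evaluating $\tfrac{1}{N}\ln P_{N,\beta}\{h_N\in du\}$ via its LDP rather than only via upper/lower bounds. I would handle this by working at fixed $r>0$ throughout: write $D(P_N^{u,r}\|P_{N,\beta})$ exactly, take $\limsup$ and $\liminf$ in $N$ using the full LDP bounds (\ref{eqldpb1})--(\ref{eqldpb2}) for $h_N$ under both $P_N$ and $P_{N,\beta}$, obtain $J_\beta$ squeezed between $\inf_{[u-r,u+r]}J_\beta$-type expressions, and then let $r\to 0$ using lower semicontinuity of $J_\beta$ to recover $J_\beta(u)$. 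This is the standard ``thickened shell'' argument already used implicitly in the paper's definitions, so it introduces nothing new beyond bookkeeping.
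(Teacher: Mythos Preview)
Your proof is correct and follows essentially the same route as the paper: compute the Radon--Nikodym derivative explicitly, observe it is constant on the support of $P_N^u$ so the relative entropy reduces to that constant, take the thermodynamic limit to obtain $d_\beta^u=J_\beta(u)$, and conclude via Proposition~\ref{lemu1}. Your added discussion of the $r\to 0$ limit is more careful bookkeeping than the paper provides, but the argument is otherwise identical.
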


\begin{proof}
The result follows simply by writing the explicit expression of the Radon-Nikodym derivative of $P_N^u$ with respect to $P_{N,\beta}$:
\be
\frac{dP_N^u}{dP_{N,\beta}}(\om)=\frac{e^{N\beta h_N(\om)} Z_N(\beta)}{P_N\{h_N\in du\}}\, \id_{du}\big( h_N(\om)\big).
\label{eqrndt1}
\ee
Inserting this expression into $D(P_{N}^u||P_{N,\beta})$ and taking the trivial expectation with respect to $P_N^u$ yields
\be
d_\beta^u=\beta u -s(u)-\varphi(\beta) =J_\beta(u),
\ee
where we have also used the limits (\ref{eqcanfr1}) and (\ref{eqsu1}) defining $\varphi(\beta)$ and $s(u)$, respectively. From this result, the statement of the theorem then follows using Proposition~\ref{lemu1} relating the zeros of the canonical rate function $J_\beta(u)$ and the concave points of the microcanonical entropy $s(u)$.
\end{proof}

Part of this theorem can be found in Theorem 5.1 (see also Lemma 5.1) of \cite{lewis1994} and is applied in that work to lattice spin systems, including the mean-field Curie-Weiss model and the 2D Ising model. For related results obtained in the context of 1D and 2D lattice gases, see \cite{stroock1991,stroock1991b,deuschel1991}. Finally, for an application to the nonequilibrium zero-range process, see \cite{grosskinsky2003,grosskinsky2008,chleboun2014}

\subsection{Radon-Nikodym derivative}

We now consider the random variable
\be
R_{N,\beta}^u(\om)=\frac{1}{N}\ln\frac{dP_N^u}{dP_{N,\beta}}(\om),
\label{eqspeca1}
\ee
which depends on the two parameters $\beta$ and $u$. We call this random variable the \emph{specific action} following the definition of a similar quantity for Markov processes \cite{touchette2009}. 

The result proved in Theorem~\ref{thmmeq1} is about the convergence in mean of $R_{N,\beta}^u$ with respect to $P_N^u$. Here we prove a stronger convergence for $R_{N,\beta}^u$ using convergence in probability with respect to both $P_N^u$ and $P_{N,\beta}$, which expresses the concentration of this random variable with respect to both measures. This is basis of our second definition of measure equivalence stated next, which implies the previous one based on the specific relative entropy.

\begin{definition}[Measure equivalence II]
The canonical and microcanonical ensembles are said to be equivalent at the measure level, in the specific action sense, if 
\be
\lim_{N\ra\infty} R_{N,\beta}^u(\om)=0
\ee
almost everywhere with respect to both $P_N^u$ and $P_{N,\beta}$.
\end{definition}

This definition can be expressed differently by saying that the two ensembles are equivalent at the measure level if $P_{N}^u$ and $P_{N,\beta}$ are logarithmically equivalent almost everywhere with respect to these measures, that is, $P_N^u(d\om)\asymp P_{N,\beta}(d\om)$ or, equivalently,
\be
\frac{dP_N^u}{dP_{N,\beta}}(\om)\asymp 1
\ee
almost everywhere with respect to $P_{N}^u$ and $P_{N,\beta}$. This is a natural definition given that the logarithmic equivalence is the defining scale of large deviation theory in general, and thermodynamic LDPs in particular. Our final result shows that this definition is also related to the concavity of the entropy, which means that it relates physically to all the definitions of equivalence studied before. The probabilistic interpretation of this new result, which can actually be extended to general measures beyond the microcanonical and canonical ensembles, is discussed in the next subsection.

\begin{theorem}[Measure equivalence II]
\label{thmmeq2}
Assume A1-A4. Then
\begin{enumerate}
\item[\emph{(a)}] \emph{Strict equivalence:} If $s$ is strictly concave at $u$, then measure equivalence holds in the specific action sense for all $\beta\in\partial s(u)$;
\item[\emph{(b)}] \emph{Nonequivalence:} If $s$ is nonconcave at $u$, then measure equivalence does not hold in the specific action sense for any $\beta\in\reals$;
\item[\emph{(c)}] \emph{Partial equivalence:} If $s$ is concave at $u$ but not strictly concave, then 
\be
\lim_{N\ra\infty} R_{N,\beta}^u(\om)=0
\label{eqmeaseq2}
\ee
$P_N^u$-almost everywhere for all $\beta\in\partial s(u)$, but the same limit is in general undefined with respect to $P_{N,\beta}$.
\end{enumerate}
\end{theorem}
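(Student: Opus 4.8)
The plan is to analyze the specific action $R^u_{N,\beta}(\om)$ using the explicit Radon-Nikodym derivative~(\ref{eqrndt1}) and the large deviation limits already established. First I would take logarithms in~(\ref{eqrndt1}) and divide by $N$, so that, on the set $\{h_N(\om)\in du\}$ where $R^u_{N,\beta}$ is finite,
\be
R^u_{N,\beta}(\om)=\beta h_N(\om)+\frac{1}{N}\ln Z_N(\beta)-\frac{1}{N}\ln P_N\{h_N\in du\}.
\ee
By Assumptions A1--A2 the last two terms converge to $-\varphi(\beta)$ and $-s(u)$ respectively (using~(\ref{eqcanfr1}) and~(\ref{eqsu1})), while on the support of $P_N^u$ we have $h_N(\om)\to u$ in the $r\ra 0$ limit. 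Hence $R^u_{N,\beta}(\om)\to \beta u - s(u) - \varphi(\beta)=J_\beta(u)$, $P_N^u$-almost everywhere, for \emph{every} $\beta$; the statement that this limit is $0$ is then exactly the condition $J_\beta(u)=0$, which by Proposition~\ref{lemu1} is equivalent to $\beta\in\partial s(u)$. This already disposes of the $P_N^u$-part of cases (a), (c), and half of (b).

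The substantive point is the behavior with respect to $P_{N,\beta}$, which is where the three cases genuinely diverge. The key observation is that $R^u_{N,\beta}$ is supported (is finite) only on the thickened shell $\{h_N\in du\}$, and under $P_{N,\beta}$ the mean energy $h_N$ satisfies the LDP with rate function $J_\beta$ by Proposition~\ref{rescanldp1}; equivalently $\cU_\beta=\{u':J_\beta(u')=0\}$ is the set of $h_N$-values on which $P_{N,\beta}$ concentrates. In case (a), strict concavity of $s$ at $u$ gives $\cU_\beta=\{u\}$ for $\beta\in\partial s(u)$ (as recalled after Proposition~\ref{lemu1}), so $P_{N,\beta}\{h_N\in du\}\to 1$ and the event on which $R^u_{N,\beta}=0$ in the limit has full $P_{N,\beta}$-measure asymptotically; combined with the $P_N^u$-convergence this yields measure equivalence in the specific action sense. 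In case (b), nonconcavity of $s$ at $u$ gives $u\notin\cU_\beta$ for all $\beta$, hence $P_{N,\beta}\{h_N\in du\}\asymp e^{-NJ_\beta(u)}$ with $J_\beta(u)>0$, so the set where $R^u_{N,\beta}$ is even finite has exponentially small $P_{N,\beta}$-measure --- convergence to $0$ almost everywhere with respect to $P_{N,\beta}$ fails, and in fact $R^u_{N,\beta}\to -\infty$ in probability off the shell. In case (c), $\cU_\beta$ is not a singleton, so $P_{N,\beta}$ splits its mass among several shells $du'$, $u'\in\cU_\beta$; the shell at $u$ carries a nonvanishing but not full fraction of the mass (the fraction being governed by subexponential corrections to the LDP, which are outside the present scope), so the $P_{N,\beta}$-limit of $R^u_{N,\beta}$ is genuinely undefined, giving the stated partial result.

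The main obstacle I anticipate is case (b) and the precise meaning of ``almost everywhere with respect to $P_{N,\beta}$'' for a sequence of measures: one must be careful that $R^u_{N,\beta}$ is $-\infty$ (not merely small) off the shell, so it does not converge to $0$ pointwise, but one should phrase the failure cleanly, e.g.\ as the statement that $P_{N,\beta}\{|R^u_{N,\beta}|>\ep\}\to 1$ for small $\ep$, or simply that the event $\{R^u_{N,\beta}\to 0\}$ has vanishing $P_{N,\beta}$-measure. A secondary delicate point, common to all three cases, is handling the $r\ra 0$ limit together with $N\ra\infty$ rigorously in the passage $h_N(\om)\to u$ on the support of $P_N^u$ and in identifying $\frac{1}{N}\ln P_N\{h_N\in du\}\to s(u)$; this is exactly the kind of order-of-limits care already flagged in Sec.~\ref{secdef}, and can be dealt with by the standard device of working at fixed $r>0$, extracting the limit $N\ra\infty$, and then letting $r\ra 0$, invoking upper semicontinuity of $s$. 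Everything else is a routine combination of~(\ref{eqrndt1}), Proposition~\ref{rescanldp1}, Proposition~\ref{lemu1}, and the defining limits~(\ref{eqcanfr1}) and~(\ref{eqsu1}).
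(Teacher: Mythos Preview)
Your proposal is correct and follows essentially the same approach as the paper: both compute the limit of $R_{N,\beta}^u$ from the explicit Radon--Nikodym derivative~(\ref{eqrndt1}) to obtain $J_\beta(u)$ on the energy shell and $-\infty$ off it, invoke Proposition~\ref{lemu1} to translate $J_\beta(u)=0$ into $\beta\in\partial s(u)$, and then use Proposition~\ref{rescanldp1} and the structure of $\cU_\beta$ to distinguish the three cases under $P_{N,\beta}$. The paper additionally introduces separate symbols $u',\beta'$ to distinguish the parameters of $R_{N,\beta}^u$ from those of the ambient ensemble, but this is a purely notational refinement of the same argument.
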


\begin{proof}
Recall that $R_{N,\beta}^u$ is a random variable that depends on the two parameters $\beta\in\dom\varphi$ and $u\in\dom s$. From the explicit expression of the Radon-Nikodym derivative found in (\ref{eqrndt1}), we have in fact
\be
r_\beta^u(\om)=\lim_{N\ra\infty} \frac{1}{N}\ln \frac{dP_{N}^{u}}{dP_{N,\beta}}(\om)=
\left\{
\begin{array}{lll}
J_\beta(u) & & h_N(\om)\in du\\
-\infty & & \text{otherwise}.
\end{array}
\right.
\label{eqrndu1}
\ee
Thus the limit $r_{\beta}^u(\om)$ is also a random variable, and since it depends only on $h_N(\om)$, it inherits by the contraction principle the LDP of $h_N$ with respect to $P_N^u$ or $P_{N,\beta}$, which means that we can describe its concentration in terms of these LDPs. 

To prove the different cases of the theorem in a complete way, we will distinguish between the parameters $u$ and $\beta$ of $r_\beta^u$ and those of the microcanonical and canonical ensemble, which we denote instead by $u'\in\dom s$ and $\beta'\in\dom\varphi$, respectively. Thus, we want to study  the concentration of $r_\beta^u$ with respect to $P_n^{u'}$ and $P_{N,\beta'}$.

We begin with the microcanonical ensemble. Clearly, $h_N(\om)\in du'$ with probability 1 with respect to $P_N^{u'}$, so that
\be
r_\beta^u(\om)=
\left\{
\begin{array}{lll}
J_\beta(u) & & u'=u\\
-\infty & & \textrm{otherwise}
\end{array}
\right.
\ee 
for all $\om$ relative to $P_N^{u'}$. Moreover, we know from Proposition~\ref{lemu1} that $J_\beta(u)=0$ if and only if $\beta\in \partial s(u)$. Therefore, $r_\beta^u=0$ relative to $P_N^u$ if $\beta\in\partial s(u)$ and $r_\beta^u\neq 0$ otherwise, proving the microcanonical half of the theorem.

For the canonical ensemble, the concentration is more involved and must be treated following the three different cases considered:
\begin{enumerate}
\item[(a)] $s$ is strictly concave at $u$: In this case, we know by Proposition~\ref{lemu1} that $J_\beta(u)=0$ and $\cU_\beta=\{u\}$ for all $\beta\in\partial s(u)$. This means that $u$ is the unique equilibrium value of $h_N$ with respect to $P_{N,\beta}$, so that
\be
\lim_{N\ra\infty} P_{N,\beta}\{h_N\in du\}= 1.
\ee
Thus, although $r_\beta^u(\om)$ diverges for $\om$ such that $h_N\notin du$, these microstates have zero measure with respect to $P_{N,\beta}$, so that $r_\beta^u=0$ almost everywhere with respect to $P_{N,\beta}$. This also holds with respect to $P_{N,\beta'}$ if $\beta'\in\partial s(u)$ and $\beta\in\partial s(u)$ but $\beta'\neq \beta$ because in that case $h_N$ still concentrates on $u$ with respect to $P_{N,\beta'}$. However, if $\beta'\notin\partial s(u)$, then $h_N$ will not concentrate on $u$, implying $r_\beta^u(\om)=-\infty$.

\item[(b)] $s$ is nonconcave at $u$: In this case, we also know from Proposition~\ref{lemu1} that $u\notin\cU_\beta$ for all $\beta\in\reals$, so that $J_\beta(u)>0$ for all $\beta\in\reals$. This directly implies $r_\beta^u\neq 0$ with respect to $P_{N,\beta'}$ with any $\beta'\in\reals$ including $\beta'=\beta$. To be more precise, we must have in fact $r_\beta^u(\om)=-\infty$ almost surely with respect to $P_{N,\beta'}$ for any $\beta'\in\reals$, since $h_N$ does not concentrate on $u$ for any $\beta'\in\reals$.

\item[(c)] $s$ is non-strictly concave at $u$: In this case, Proposition~\ref{lemu1} implies that $J_\beta(u)=0$ for $\beta\in\partial s(u)$; however, although $u\in\cU_\beta$, $u$ is not the only element of $\cU_\beta$, which means that the concentration point of $h_N$ with respect to $P_{N,\beta'}$ with $\beta'=\beta$ is in general unknown: it can be $u$, in which case $r_\beta^u(\om)=0$ as (a), or it can be a different mean energy value, in which case $r_\beta^u(\om)=-\infty$ as in (b). 
\end{enumerate}
The indefinite result in (c) is a consequence again of the phase coexistence arising when $s(u)$ is non-strictly concave. If we make the additional assumption that $u\in\cU_\beta$ \emph{is} a concentration point of $h_N$ with respect to $P_{N,\beta}$, as discussed in Subsection~\ref{secdefeqmac}, then we recover measure equivalence as in case (a). Consequently, under this additional hypothesis, measure equivalence holds in the specific action sense if and only if $s(u)$ is concave (strictly or non-strictly) and, therefore, if and only if macrostate and thermodynamic equivalence holds.
\end{proof}

The roles of $P_{N,\beta}$ and $P_N^u$ can be reversed in all the results of this section to study the convergence of $dP_{N,\beta}/dP_N^u$ instead of $dP_{N}^u/dP_{N,\beta}$. Indeed, although the former Radon-Nikodym derivative diverges for some $\om\in\Lambda_N$ because $P_N^u \not\gg P_{N,\beta}$, these divergences happen to be exactly cancelled when measure equivalence holds because $P_{N,\beta}$ concentrates towards $P_N^u$ in the thermodynamic limit, which implies that these divergencies have zero measure in the canonical ensemble. From this, one can re-derive results similar to Theorems~\ref{thmmeq1} and \ref{thmmeq2} with $P_{N,\beta}$ and $P_N^u$ interchanged in the definitions of the relative entropy and Radon-Nikodym derivative.

Measure equivalence can also be derived directly from the macrostate level, in two different ways in fact, without having to calculate the Radon-Nikodym derivative, as done above. On the first hand, we can consider the specific action $R_{N,\beta}^u$ as a macrostate (it is a function of $\om$) and apply our results of Sec.~\ref{secmacro}. The LDP of this macrostate with respect to $P_N^u$ is trivial, while its LDP with respect to $P_{N,\beta}$ follows by contraction from the LDP of $h_N$ in the canonical ensemble, as mentioned in the proof of Theorem~\ref{thmmeq2}. With these LDPs, we can then apply Theorem~\ref{theoremellis2} to obtain Theorem~\ref{thmmeq2}, which clearly demonstrates that measure equivalence is directly related to macrostate equivalence.

On the other hand, we can consider the level-3 empirical process, mentioned before, and prove that the equilibrium points of this infinite-dimensional macrostate converge to the ensemble measures in the thermodynamic limit. This more abstract approach is followed in \cite{deuschel1991,stroock1991,stroock1991b,roelly1993,georgii1993,georgii1994,georgii1995} and is also used for proving the equivalence of Gibbs measures (or Gibbs random fields) and translationally invariant measures in the thermodynamic limit \cite{follmer1988b,follmer1988,georgii1988}.

\subsection{Asymptotic equipartition property}

The integral (\ref{eqmix1}) is a Laplace integral that concentrates in an exponential way as $N\ra\infty$ on the set $\cU_\beta$ of canonical equilibrium values of $h_N$, as explained before. In the particular case where $\cU_\beta$ is a singleton $\{u_\beta\}$, we can approximate this integral on the exponential scale to formally write
\be
P_{N,\beta}(d\om) \asymp P_N^{u_\beta}(d\om),
\ee
which recovers our definition of measure equivalence based on the specific action and the logarithmic equivalence. For a uniform prior $P_{N}(d\om)$ this means that, although $P_{N,\beta}(d\om)$ varies in general from one microstate to another according to their energy, most microstates with respect to $P_{N,\beta}$ are roughly equiprobable, as in the microcanonical ensemble, because most of these microstates have a constant energy $u_\beta$ with respect to $P_{N,\beta}$. 

In information theory, this equiprobability property of random sequences (here microstates) is called the \emph{asymptotic equipartition property} (AEP) and the set of sequences (viz., microstates) having this property is called the \emph{typical} or \emph{typicality set} \cite{cover1991}. In information theory, these sequences are those that contribute most to the entropy of a source because they appear in a typical way, whereas in statistical physics the corresponding microstates are those that contribute most to the thermodynamics and equilibrium behavior of a system in the thermodynamic limit. The number (or volume) of such microstates can be estimated as follows. Let $\Lambda_{N,u}$ denote the subset of microstates having a mean energy $h_N(\om)$ close to $u$, that is,
\be
\Lambda_{N,u}=\{\om\in\Lambda_N: h_N(\om)\in du\}.
\ee
Assuming strict equivalence, we have
\be
P_{N,\beta}\{h_N\in du_\beta\}=P_{N,\beta}\{\Lambda_{N,u_\beta}\}\asymp 1
\ee
for a unique $u_\beta$, which means $\Lambda_{N,u_\beta}$ is a typical set in the canonical ensemble. For $P_{N}$ uniform, we thus have that most microstates are such that $H_N(\om)=Nu_\beta+o(N)$ and
\be
P_{N,\beta}(d\om)\asymp \frac{e^{-\beta N u_\beta}}{Z_N(\beta)},
\ee
in the thermodynamic limit, which implies that the volume of these microstates must approximately be given by 
\be
|\Lambda_{N,u_\beta}|\asymp e^{\beta N u_\beta} Z_N(\beta).
\ee
This form of AEP follows from our results for any $N$-particle system satisfying assumptions A1-A2, that is, any system with a well-defined thermodynamic-limit free energy and entropy.

\section{Other ensembles}
\label{secother}

As mentioned in the introduction, our discussion of ensemble equivalence centered on the canonical and microcanonical ensembles to be specific and to simplify the notations. In this section, we briefly discuss how these results are generalized to ensembles other than the canonical and microcanonical. By way of example, we start  with the equivalence of the canonical and grand-canonical ensembles, used for example to describe the liquid-gas transition, and then point out how more general dual ensembles can be treated following the results of \cite{ellis2000}. We discuss finally the case of ensembles defined on random paths of stochastic processes rather than static (spatial) configurations.

\subsection{Canonical and grand-canonical ensembles}

Denote by $H_V$ the energy of a system with volume $V$ and by $N_V$ its particle number. The grand-canonical ensemble associated with this system is defined by the probability measure
\be
P_{V,\beta,\mu}(d\om)=\frac{e^{-\beta (H_V(\om)-\mu N_V(\om))}}{Z_V(\beta,\mu)}P_V(d\om),
\ee
where
\be
Z_V(\beta,\mu)=\int_{\Lambda_V} e^{-\beta (H_V(\om)-\mu N_V(\om))}\, P_V(d\om)
\ee
is the \emph{grand-canonical partition function} and $P_V(d\om)$ is the prior measure on the space $\Lambda_V$ of microstates at volume $V$. This ensembles extends, as is well known, the canonical ensemble by allowing fluctuations of the particle number $N_V(\om)$ in a system of fixed volume $V$. In terms of the particle density $r_{V}(\om)=N_V(\om)/V$, the canonical ensemble with fixed density $r_V=\rho$ is then defined as
\be
P_{V,\beta}^\rho(d\om)= \frac{e^{-\beta H_V(\om)}}{W_V^\rho(\beta)}\, \id_{d\rho} \big(r_N(\om)\big) \, P_V(d\om),
\ee
where $Z_V^\rho(\beta)$ is a normalization factor given by
\be
Z_{V}^\rho(\beta)=\int_{\Lambda_N} e^{-\beta H_V(\om)}\, \id_{d\rho} \big(r_N(\om)\big)\, P_V(d\om)=E_{P_V}\left[e^{-\beta H_V}\, \id_{d\rho} (r_N)\right]
\ee
and, as before, $d\rho$ is some infinitesimal interval centered at $\rho$. The superscripts and subscripts in these expressions follow the notations of \cite{ellis2000} and denote either a microcanonical-like constraint (superscript $\rho$) or a canonical-like exponential (subscript $\mu$) involving a Lagrange parameter conjugated to the constraint.

Comparing these ensembles with the definitions of the original canonical and microcanonical ensembles, it is easy to see that the grand-canonical ensemble conditioned on a fixed value of the particle density $r_V=\rho$ is equivalent to the canonical ensemble, which means that the former is a probabilistic mixture of the latter, with $r_N$ playing the role of the `mixing' random variable. Following our discussion of macrostate equivalence, the probability measure of $r_N$ that determines this mixture is the one obtained in the non-constrained ensemble, that is, the grand-canonical ensemble. Assuming that this probability measure satisfies the LDP,
\be
P_{V,\beta,\mu}\{r_N\in d\rho\}\asymp e^{-V J_{\beta,\mu}(\rho)}
\ee
in the thermodynamic limit $V\ra\infty$ with $\rho=r_V/V$ constant, we find from the probabilistic mixture that the rate function $J_{\beta,\mu}(\rho)$ is given by 
\be
J_{\beta,\mu}(\rho)=-\beta\mu\rho-s_\beta(\rho)-\varphi(\beta,\mu)
\label{eqratefctgc1}
\ee
where
\be
\varphi(\beta,\mu)=\lim_{V\ra\infty} -\frac{1}{V}\ln Z_V(\beta,\mu)
\ee
is the grand-canonical free energy, or grand potential, and 
\be
s_\beta(\rho) =\lim_{V\ra\infty}\frac{1}{V}\ln Z_V^\rho(\beta)
\label{eqsrho1}
\ee
is the thermodynamic potential associated with the canonical ensemble with fixed Lagrange parameter $\beta$ and fixed constraint $\rho$. The grand-canonical potential $\varphi(\beta,\mu)$ obviously plays the role of $\varphi(\beta)$ while $s_\beta(\rho)$ takes the role of $s(u)$. Therefore, what determines the equivalence of the grand-canonical and canonical ensemble, with respect to $r_N$, is the concavity of $s_\beta(\rho)$ as a function of $\rho$. In other words, all our results involving $s(u)$ generalize to these ensembles by considering $s_\beta(\rho)$ instead.

To see this more clearly, rewrite the grand-canonical and canonical measures as
\be
P_{V,\beta,\mu}(d\om)=\frac{e^{-\gamma V r_V(\om)}}{Z_V(\beta,\mu)}Q_{V,\beta}(d\om),
\ee
and
\be
P_{V,\beta}^\rho(d\om)=\frac{\id_{d\rho}\big(r_N(\om)\big)}{W_V^\rho(\beta)}\,  Q_{V,\beta}(d\om),
\ee
respectively, by defining $\gamma=-\beta\mu$ and the positive but non-normalized measure
\be
Q_{V,\beta}(d\om)=e^{-\beta H_V(\om)} P_V(d\om).
\ee
Then these ensembles take the same form as the canonical and microcanonical ensembles, respectively, but with the prior measure $P_N$ replaced by $Q_{V,\beta}$. Moreover, $h_N$ is replaced by $r_V$ while $N$ is replaced by $V$. As a result, the entropy function $s(u)$ defined in (\ref{eqsu1}) which determines equivalence between the canonical and microcanonical ensembles must now be defined for $r_N$ with respect to $Q_{V,\beta}$, which leads us to $s_\beta(\rho)$ as defined in (\ref{eqsrho1}).

For applications of these ideas to the case of two constraints involving the energy and magnetization, see \cite{campa2007,kastner2010,kastner2010b,olivier2014}; for an application to the zero-range process with a single particle density constraint, see \cite{grosskinsky2008}.

\subsection{Mixed ensembles}

Ensembles involving more than one constraints can be treated along the lines just discussed or, more completely, by following Sec.~5 of Ellis, Haven and Turkington \cite{ellis2000} who refer to these ensembles as `mixed ensembles'. Here, we briefly summarize the changes that need to be taken into account, following the notations of \cite{ellis2000}. In terms of definitions, the changes are as follows:
\begin{itemize}
\item Write all the conserved quantities $h_{N,1},\ldots,h_{N,\sigma}$ considered in the model as a vector $h_N=(h_{N,1},\ldots,h_{N,\sigma})$, referred to as the \emph{generalized Hamiltonian}.

\item Denote the quantities to be treated canonically as $h_N^1$ and those to be treated microcanonically (as constraints) as $h_N^2$. Then write $h_N=(h_N^1,h_N^2)$.

\item Associate a vector $\beta=(\beta_1,\ldots,\beta_\sigma)$ of Lagrange parameters to $h_N$ and denote the restriction of that vector associated with the canonical part $h_N^1$ by $\beta^1$.

\item Define the full canonical ensemble for $h_N$ as
\be
P_{N,\beta}(d\om)=\frac{e^{- N \langle \beta, h_N(\om)\rangle }}{Z_N(\beta)} P_N(d\om)
\ee
where $\langle\beta, h_N\rangle=\sum_{i=1}^\sigma \beta_i h_{N,i}$ is the normal scalar product.

\item Define the \emph{mixed ensemble} with $h_N^1$ treated canonically and $h_N^2$ treated microcanonically as
\be
P_{N,\beta_1}^{u^2}(d\om)=\frac{e^{-N\langle \beta_1, h_N^1(\om)\rangle}}{Z_{N}^{u^2}(\beta_1)}\, \id_{du^2}\big(h_N^2(\om)\big)\, P_N(d\om)
\ee
\end{itemize}

The equivalence of the canonical and mixed ensembles is determined using the same results as before with the following changes:
\begin{itemize}
\item The real parameter $\beta$ is now a vector in $\reals^\sigma$.
\item The real parameter $u$ is replaced by the vector $u^2$.
\item $\varphi(\beta)$ is still defined from $Z_N(\beta)$ with $\beta$ now a vector.
\item $s(u)$ is replaced by the thermodynamic potential $s_{\beta_1}(u^2)$ of the mixed ensemble:
\be
s_{\beta_1}(u^2)=\lim_{N\ra\infty} \frac{1}{N}\ln Z_N^{u^2}(\beta_1).
\ee

\item The product $\beta u$ in the Legendre-Fenchel transform is replaced by the scalar product $\langle\beta_2, u^2\rangle$.
\item Supporting lines must be replaced by supporting planes or hyperplanes; see \cite{ellis2000}.
\item Concave points of vector functions have supporting hyperplanes in their domain except possibly at relative boundary points, that is, points on the boundary of the relative interior of their domain; see also \cite{ellis2000}.
\end{itemize}

The first change concerning $s(u)$ should be clear from our discussion of the grand-canonical and canonical case; for more details, see Sec.~5 of \cite{ellis2000}

\subsection{Nonequilibrium ensembles}

The physical interpretation of $P_{N,\beta}$ and $P_N^u$ is not important for establishing their equivalence in the large $N$ limit. Clearly, this equivalence is a general relation between a measure conditioned on some event or constraint and a measure obtained by replacing this conditioning with an exponential factor involving a Lagrange parameter dual to the constraint. Mathematically, we say that equivalence is between a conditioning and a tilting of the same measure, in a scaling limit that depends on the nature of the objects or structures on which these measures are defined.

This general view of ensemble equivalence is potentially useful for replacing constrained (Monte Carlo) sampling schemes, arising for example in rare event simulations \cite{asmussen2007} and the sampling of random graphs \cite{squartini2015}, by modified sampling schemes based on exponentially-tilted distributions. It can also be used to establish the equivalence of microcanonical and canonical \emph{path ensembles} that are useful for describing the properties of nonequilibrium systems.

To illustrate this case, consider the probability measure $P_T(d\om)$ defined on the space $\Lambda_T$ of random paths $\om=\{\om_t\}_{t=0}^T$ of a continuous-time process evolving over a time interval $[0,T]$. This probability plays the role of the prior $P_N$. For a macrostate or \emph{observable} $A_T$, which is a functional of $\om$, it is natural to define a microcanonical path ensemble as
\be
P_T^a(d\om)=P_T\{d\om|A_T\in da\},
\ee
to describe the subset of paths of the process leading to a fluctuation $A_{T}=a$. The corresponding canonical path ensemble is
\be
P_{T,k}(d\om)=\frac{e^{kTA_T(\om)}}{W_T(k)}P_T(d\om)
\ee
where $k\in\reals$ and
\be
W_T(k)=E_{P_T}[e^{kTA_T}].
\ee
From these definitions, we see that the parameter $k$ plays the role of (minus) an inverse temperature and that the time $T$ plays the role of the particle number $N$, so that the thermodynamic limit is now $T\ra\infty$ with $A_T$ finite. In this limit, all our equivalence results holds for $P_T^a$ and $P_{T,k}$ under assumptions similar to A1-A4. In particular, assuming that $A_T$ satisfies the LDP with respect to $P_T$ with rate function $I(a)$, then $P_T^a$ and $P_{T,k}$ are equivalent in the specific action sense if $I$ is convex at $a$.\footnote{Convexity is used instead of concavity because $I$ is defined as a rate function rather than an entropy function.} In this case, we also have macrostate equivalence, which means that $P_T^a$ and $P_{T,k}$ lead to the same stationary or ergodic values of observables. 

The equivalence of these path ensembles was discussed recently in \cite{jack2010b,chetrite2013,chetrite2014} for general Markov processes. An interesting open problem is to find examples of stochastic processes and observables characterized by nonconvex rate functions for which the microcanonical and canonical path ensembles are not equivalent. For applications of these ensembles in the context of sheared fluids, glasses, and other nonequilibrium systems, see \cite{evans2005a} and the review \cite{chandler2010}.

\appendix

\section{Concavity of the entropy}
\label{appconc}

Let $s:\reals\ra\reals\cup\{-\infty\}$ be a real function with domain $\dom s$, and consider the inequality
\be
s(v)\leq s(u)+\beta(v-u),\quad v\in\reals.
\ee
The set of all $\beta\in\reals$ for which this inequality is satisfied is called the \emph{subdifferential set} or simply the \emph{subdifferential} of $s$ at $u$ and is denoted by $\partial s(u)$.\footnote{The term `superdifferentials' should be used for concave functions, but we will keep to the more common term `subdifferentials'.} The interpretation of this inequality is shown in Fig.~\ref{fignoneq1}: if it is possible to draw a line passing through the graph of $s(u)$ which is everywhere above $s$, then $\partial s(u)\neq \emptyset$. In this case, we also say that $s$ admits a \emph{supporting line} at $u$, which is unique if $s$ is differentiable at $u$. If $\partial s(u)=\emptyset$, then $s$ admits no supporting line at $u$. 
 
It is easy to see geometrically that nonconcave points of $s$ do not admit supporting lines, while concave points have supporting lines, except possibly if they lie on the boundary of $\dom s$; see Sec.~24 of \cite{rockafellar1970} or Appendix A of \cite{costeniuc2005}. The reason for possibly excluding boundary points arises because $s(u)$ may have diverging `slopes' where $\partial s(u)$ is not defined, as in the following example adapted from \cite[p.~215]{rockafellar1970}:
\be
s(u)=\left\{
\begin{array}{lll}
\sqrt{1-x^2} & & |x|\leq 1\\
-\infty & & \textrm{otherwise}.
\end{array}
\right.
\ee 
In this case, $s(u)=s^{**}(u)$ for all $u\in\dom s=[-1,1]$, so that $s$ is a concave function, but it has supporting lines only over $(-1,1)=\textrm{int}(\dom s)$, since $s'(u)$ diverges as $u\ra\pm 1$ from within its domain. All cases of concave points with no supporting lines are of this type, since it can be proved in $\reals$ that
\be
\textrm{int}(\dom s)\subseteq \dom \partial s\subseteq \dom s;
\ee
see again Sec.~24 of \cite{rockafellar1970} or Appendix A of \cite{costeniuc2005}. 

With this proviso on boundary points, $s$ is often defined to be \emph{strictly concave} at $u$ if it admits supporting lines at $u$ that do not touch other points of its graph. If $s$ has a supporting line at $u$ touching other points of its graph, then $s$ is said to be \emph{non-strictly concave} at $s$. Finally, if $s$ admits no supporting line at $u$, then $s$ is said to be \emph{nonconcave} at $u$. These definitions are also illustrated in Fig.~\ref{fignoneq1}. For generalizations of these definitions to $\reals^d$ in terms of \emph{supporting hyperplanes}, see \cite{rockafellar1970} and Appendix A of \cite{costeniuc2005}.

\section{Varadhan's Theorem and the Laplace principle}
\label{appvar}

We recall in this section two important results about Laplace approximations of exponential integrals in general spaces. In the following, $\{a_n\}_{n=1}^\infty$ is an increasing sequence such that $a_n\nearrow\infty$ when $n\ra\infty$. Moreover, $\{P_n\}_{n=1}^\infty$ is a sequence of probability measures defined on a (Polish) space $\cX$. In this paper, $N$ takes the role of $a_n$ and $n$.

\begin{theorem}[Varadhan, 1966 \cite{varadhan1966}]
Assume that $P_n(dx)$ satisfies the LDP with speed $a_n$ and rate function $I$ on $\cX$. Let $F$ be a continuous function.
\begin{enumerate}
\item[\emph{(a)}] \emph{(Bounded case)} Assume that $\sup_x F(x)<\infty$. Then
\be
\lim_{n\ra\infty} \frac{1}{a_n} \ln \int_{\cX} e^{a_n F(x)} P_n(dx) =\sup_{x\in\cX} \{F(x)-I(x)\}<\infty.
\label{eqvar1}
\ee

\item[\emph{(b)}] \emph{(Unbounded case)} Assume that $F$ satisfies
\be
\lim_{L\ra\infty} \lim_{n\ra\infty} \frac{1}{a_n}\ln \int_{\{F\geq L\}} e^{a_n F(x)} P_n(dx)=-\infty
\ee
Then the result of \emph{(a)} holds and is finite. In particular, if $F$ is bounded above on the support of $P_n$, then \emph{(a)} holds. 
\end{enumerate}
\end{theorem}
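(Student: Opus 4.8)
\emph{Strategy and lower bound.} The plan is to prove the bounded case (a) by establishing the matching $\liminf$ and $\limsup$ bounds separately --- using only the open-set half of the LDP for the lower bound and only the closed-set half for the upper bound --- and then to deduce the unbounded case (b) from (a) by truncating $F$ at a level $L$ and sending $L\ra\infty$. For the lower bound, I would fix an arbitrary $x_0\in\cX$ and $\ep>0$; continuity of $F$ makes $O=\{x:F(x)>F(x_0)-\ep\}$ an open neighbourhood of $x_0$, so that
\be
\int_\cX e^{a_nF(x)}P_n(dx)\ \geq\ e^{a_n(F(x_0)-\ep)}\,P_n(O).
\ee
Taking $\frac{1}{a_n}\ln$, passing to $\liminf_{n\ra\infty}$, and invoking the LDP lower bound for the open set $O$ gives $\liminf_n\frac{1}{a_n}\ln\int_\cX e^{a_nF}\,dP_n\geq F(x_0)-\ep-\inf_{x\in O}I(x)\geq F(x_0)-\ep-I(x_0)$; letting $\ep\ra0$ and then taking the supremum over $x_0$ yields $\liminf\geq\sup_x\{F(x)-I(x)\}$. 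This part is routine.

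\emph{Upper bound in (a).} Write $M=\sup_xF(x)<\infty$ and $\rho=\sup_x\{F(x)-I(x)\}$, and note $\rho\leq M<\infty$ since $I\geq0$. I would fix $\ep>0$ and $L>0$ and split $\cX$ into the tail $\{F<-L\}$, on which $e^{a_nF}\leq e^{-a_nL}$, together with the finitely many (with an $n$-independent count) ``slabs'' $\{j\ep\leq F<(j+1)\ep\}$ over the integers $j$ with $-L\leq j\ep\leq M$; bounding $e^{a_nF}$ by $e^{a_n(j+1)\ep}$ on slab $j$ and enlarging that slab to the \emph{closed} set $\{F\geq j\ep\}$ gives
\be
\int_\cX e^{a_nF(x)}P_n(dx)\ \leq\ e^{-a_nL}+\sum_{j}e^{a_n(j+1)\ep}\,P_n\big(\{x:F(x)\geq j\ep\}\big).
\ee
By the LDP upper bound for closed sets, $\limsup_n\frac{1}{a_n}\ln P_n(\{F\geq j\ep\})\leq-\inf\{I(x):F(x)\geq j\ep\}$; moreover on $\{F\geq j\ep\}$ one has $I(x)\geq F(x)-\rho\geq j\ep-\rho$, so the exponential rate of the $j$-th term is at most $(j+1)\ep-(j\ep-\rho)=\ep+\rho$. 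Since the $\frac{1}{a_n}\ln$ of a fixed finite sum of nonnegative exponentials has rate equal to the largest individual rate, $\limsup_n\frac{1}{a_n}\ln\int_\cX e^{a_nF}\,dP_n\leq\max(-L,\ep+\rho)$; letting $L\ra\infty$ and then $\ep\ra0$ gives $\limsup\leq\rho$, which together with the lower bound proves (a).

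\emph{Reduction of (b), and the main obstacle.} For (b) I would truncate: with $F_L=\min(F,L)$, continuous and bounded above by $L$, part (a) gives $\lim_n\frac{1}{a_n}\ln\int_\cX e^{a_nF_L}\,dP_n=\sup_x\{F_L(x)-I(x)\}$. Combining $\int_\cX e^{a_nF}\,dP_n\leq\int_\cX e^{a_nF_L}\,dP_n+\int_{\{F\geq L\}}e^{a_nF}\,dP_n$ with the hypothesized tail estimate yields $\limsup_n\frac{1}{a_n}\ln\int_\cX e^{a_nF}\,dP_n\leq\max\big(\sup_x\{F_L-I\},\,g(L)\big)$, where $g(L)\ra-\infty$; since $F_L\uparrow F$ forces $\sup_x\{F_L-I\}\uparrow\sup_x\{F-I\}$, sending $L\ra\infty$ gives the upper bound with $\sup_x\{F-I\}$, while the crude estimate $\int_\cX e^{a_nF}\,dP_n\leq e^{a_nL}+\int_{\{F\geq L\}}e^{a_nF}\,dP_n$ shows this supremum is finite, and the matching lower bound is immediate from $\int_\cX e^{a_nF}\,dP_n\geq\int_\cX e^{a_nF_L}\,dP_n$ (the special case where $F$ is bounded above on the support of $P_n$ being handled identically). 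I expect the upper bound in (a) to be the genuine obstacle: the LDP controls only $P_n$ of \emph{closed} sets, so one must dissect $\cX$ into level sets of $F$ whose number does not grow with $n$ --- so that the $\limsup$ passes through the sum --- while still recovering exactly $\sup_x\{F(x)-I(x)\}$, and not something larger, in the limit; the inequality $I(x)\geq F(x)-\rho$, true by the very definition of $\rho$, is precisely what makes the slab estimate telescope to the right value. The remaining technical points are accommodating $F$ unbounded below in (a) (the explicit $\{F<-L\}$ term) and the interchange of the $L\ra\infty$ and $n\ra\infty$ limits in (b) (controlled by the stated tail hypothesis).
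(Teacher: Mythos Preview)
The paper does not actually prove this theorem: immediately after the statement it simply writes ``For a proof of this result, see the Appendix~B of \cite{ellis1985} or Theorem~4.3.1 in \cite{dembo1998},'' and moves on. So there is nothing in the paper to compare your argument against.

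That said, your proof is correct and is precisely the standard argument one finds in those references. The lower bound via a single open superlevel set of $F$, the upper bound in (a) via finitely many $\ep$-slabs $\{j\ep\leq F<(j+1)\ep\}$ enlarged to the closed sets $\{F\geq j\ep\}$, the key observation that $I(x)\geq F(x)-\rho$ on each slab (which is what makes the estimate collapse to $\rho+\ep$), and the reduction of (b) to (a) by truncating $F$ at height $L$ and invoking the tail hypothesis to kill the remainder --- all of this is exactly the Dembo--Zeitouni proof. Your identification of the genuine obstacle (controlling the upper bound with an $n$-independent finite cover so that $\limsup$ passes through the sum) is also on the mark. There is nothing to correct.
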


For a proof of this result, see the Appendix B of \cite{ellis1985} or Theorem 4.3.1 in \cite{dembo1998}. For historical notes on this result, see Sec.~3.7 of \cite{touchette2009}

Consider now the \emph{exponentially tilted} probability measure
\be
P_{n,F}(dx) = \frac{e^{a_n F(x)} P_n(dx)}{W_{n,F}},
\ee
where
\be
W_{n,F}=\int_{\cX} e^{a_nF(x)} P_n(dx)=E_P[e^{a_nF(X)}].
\ee
This is also known as the \emph{exponential family} or \emph{Esscher transform} of $P_n$.

\begin{theorem}[LDP for tilted measures]
\label{thmtvar1}
Assume that $W_{n,F}<\infty$. Then $P_{n,F}$ satisfies the LDP with speed $a_n$ and rate function
\be
I_{F} (x)=I(x)-F(x)+\lambda_F,
\ee
where
\be
\lambda_F=\lim_{n\ra\infty}\frac{1}{a_n}\ln W_{n,F}.
\ee
\end{theorem}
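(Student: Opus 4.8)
The plan is to transfer the LDP from $P_n$ to $P_{n,F}$ directly from the definition of the LDP given in Appendix~\ref{appvar}, using the elementary identity
\[
\frac{1}{a_n}\ln P_{n,F}\{X\in A\}
=\frac{1}{a_n}\ln\int_{\cX}\id_A(x)\,e^{a_nF(x)}\,P_n(dx)\;-\;\frac{1}{a_n}\ln W_{n,F},
\]
valid for every measurable $A$. The last term converges to $\lambda_F$ by hypothesis, and under the standing assumptions of Varadhan's Theorem~--~namely $\sup_x F(x)<\infty$, or more generally the tail condition of part (b), which is what one verifies in applications such as Proposition~\ref{rescanldp1}~--~one moreover has $\lambda_F=\sup_x\{F(x)-I(x)\}$. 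In particular $I_F:=I-F+\lambda_F\ge 0$, and $I_F$ is lower semicontinuous because $I$ is and $F$ is continuous, so it is a legitimate rate function. It then remains to prove the large-deviation upper and lower bounds for the integral $\int_{\cX}\id_A e^{a_nF}\,dP_n$ over closed and open $A$, respectively, since subtracting the constant $\lambda_F$ turns the Laplace-type bounds on this integral into the required bounds on $P_{n,F}$.

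For the lower bound, fix an open set $O$ and a point $y\in O$ with $I(y)<\infty$ (if no such point exists the bound is trivial). By continuity of $F$, the set $B=\{x\in O:F(x)>F(y)-\ep\}$ is open, contains $y$, and $\int_O e^{a_nF}\,dP_n\ge e^{a_n(F(y)-\ep)}P_n(B)$. Applying the large-deviation lower bound for $P_n$ on the open set $B$ and using $\inf_{x\in B}I(x)\le I(y)$ gives $\liminf_n\frac{1}{a_n}\ln\int_O e^{a_nF}\,dP_n\ge F(y)-\ep-I(y)$; letting $\ep\downarrow 0$ and optimizing over $y\in O$ yields $\sup_{y\in O}\{F(y)-I(y)\}$, hence $\liminf_n\frac{1}{a_n}\ln P_{n,F}(O)\ge-\inf_{y\in O}I_F(y)$, as required.

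For the upper bound over a closed set $C$ I would use the localization argument that underlies the proof of Varadhan's Theorem itself (see Appendix~B of \cite{ellis1985} or Theorem~4.3.1 and its lemmas in \cite{dembo1998}): truncate $F$ at a level $M$, split $\int_C e^{a_nF}\,dP_n$ into the contributions of $C\cap\{F<M\}$ and $C\cap\{F\ge M\}$; the latter is driven to $-\infty$ by the tail hypothesis (and is simply absent when $F$ is bounded above), while on the former one covers $C$ by small balls, bounds $F$ on each ball by its value at the centre plus $\ep$ using continuity, applies the large-deviation upper bound for $P_n$ on each closed ball, and then lets the radii and $\ep$ tend to $0$ and $M\to\infty$. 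This gives $\limsup_n\frac{1}{a_n}\ln\int_C e^{a_nF}\,dP_n\le\sup_{x\in C}\{F(x)-I(x)\}$, and subtracting $\lambda_F$ gives $\limsup_n\frac{1}{a_n}\ln P_{n,F}(C)\le-\inf_{x\in C}I_F(x)$.

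The main obstacle is this upper bound: it is exactly the step where the possible unboundedness of $F$ must be reconciled with an arbitrary closed (hence possibly non-compact) set $C$, which is precisely why Varadhan's Theorem carries its two-case (bounded above / tail-controlled) structure; the bounded case is routine, the unbounded case requires the very tail estimate already imposed for Varadhan. A legitimate shortcut here is to quote the localized Laplace bounds over closed and open sets as a standard refinement of Varadhan's Theorem and then invoke the normalization identity above~--~this is in fact the route implicitly used in the proof of Proposition~\ref{rescanldp1}.
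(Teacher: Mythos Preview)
The paper does not actually prove this theorem: immediately after the statement it simply cites Theorem~II.7.2 of \cite{ellis1985}, Proposition~3.4 and Theorem~9.1 of \cite{ellis1995}, and Theorem~4.1 of \cite{lewis1995}. There is therefore no in-paper argument to compare against; your proposal is a self-contained proof where the paper offers only a pointer to the literature.

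That said, your route is the standard one and is essentially what one finds in the cited references: rewrite $P_{n,F}(A)$ as a Laplace-type integral over $A$ divided by $W_{n,F}$, then establish localized Varadhan bounds for $\int_A e^{a_nF}\,dP_n$ over open and closed $A$. Your lower-bound argument is clean and complete. For the upper bound, one small caution: the phrase ``cover $C$ by small balls'' as written suggests a finite subcover, which is not available when $C$ is merely closed in a non-compact Polish space. The actual argument (as in Lemma~4.3.6 of \cite{dembo1998}) handles this by first truncating $I$ as well as $F$ --- replacing $I$ by $I^\delta=\min(I-\delta,1/\delta)$ --- so that the resulting level sets interact correctly with the tail estimate and no compactness of $C$ is needed. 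You clearly see this is the delicate step and you invoke the right references, so this is a matter of phrasing rather than a genuine gap. Your final remark, that one may simply quote the localized Laplace bounds as a refinement of Varadhan and then subtract $\lambda_F$, is exactly how the paper itself uses this result in the proof of Proposition~\ref{rescanldp1}.
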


 A proof of this result can be found in Theorem 11.7.2 of \cite{ellis1985} or by combining Proposition 3.4 and Theorem 9.1 of \cite{ellis1995}. A thermodynamic version of this result also appears in Theorem 4.1 of \cite{lewis1995}.

\begin{acknowledgments}
I would like to thank many colleagues who have provided useful ideas, comments, and support during the last 12 years that I worked on long-range systems and nonequivalent ensembles: Julien Barr\'e, Freddy Bouchet, Raphael Chetrite, Thierry Dauxois, Rosemary J.~Harris, Michael Kastner, Cesare Nardini, and Stefano Ruffo. I especially want to thank Richard S. Ellis for introducing me to many gems and subtleties of large deviations. The present paper owes much to his work.
\end{acknowledgments}

\bibliography{masterbib}

\end{document}